\newtheorem{thm}{\bf Theorem}
\newtheorem{lem}{\bf Lemma}
\newtheorem{prop}{\bf Proposition}
\newtheorem{assum}{\bf Assumption}
\newtheorem{rem}{\bf Remark}
\DeclareMathOperator{\tr}{tr}  
\DeclareMathOperator{\blkdiag}{blkdiag}
\newcommand{\T}{^\top}  
\newcommand{\SO}{\mathbb{SO}}
\newcommand{\SE}{\mathbb{SE}}
\newcommand{\I}{\mathcal{I}}
\newcommand{\B}{\mathcal{B}}
\newcommand{\C}{\mathcal{C}} 
\newcommand{\PB}{{}^\B} 
\newcommand{\PC}{{}^\C}
\newcommand{\ie}{\textit{i.e.}}
\title{\LARGE \bf
Pose, Velocity and Landmark Position Estimation Using IMU and Bearing Measurements
}
\author{Miaomiao Wang and Abdelhamid Tayebi
	\thanks{M. Wang is with the School of Artificial Intelligence and Automation, Huazhong University of Science and Technology, Wuhan 430074, China. Email: {\tt\small mmwang@hust.edu.cn}}
	\thanks{A. Tayebi is with the Department of Electrical Engineering, Lakehead University, Thunder Bay, ON P7B 5E1, Canada, and also with the Department of Electrical and Computer Engineering, Western University, London, ON N6A 3K7, Canada. E-mail: {\tt\small atayebi@lakeheadu.ca}. }
}%
\begin{document}

\maketitle
\thispagestyle{empty}
\pagestyle{empty}

\begin{abstract}

This paper investigates the estimation problem of the pose (orientation and position) and linear velocity of a rigid body, as well as the landmark positions, using an inertial measurement unit (IMU) and a monocular camera. First, we propose a globally exponentially stable (GES) linear time-varying (LTV) observer for the estimation of body-frame landmark positions and velocity, using IMU and monocular bearing measurements. Thereafter, using the gyro measurements, some landmarks known in the inertial frame and the estimates from the LTV observer, we propose a nonlinear pose observer on $\SO(3)\times \mathbb{R}^3$. The overall estimation system is shown to be almost globally asymptotically stable (AGAS) using the notion of almost global input-to-state stability (ISS). Interestingly, we show that with the knowledge (in the inertial frame) of a small number of landmarks, we can recover (under some conditions) the unknown positions (in the inertial frame) of a large number of landmarks.
Numerical simulation results are presented to illustrate the performance of the proposed estimation scheme.

\end{abstract}

\section{Introduction}

The development of robust estimation algorithms for the pose (attitude and position) and linear velocity of a rigid body is instrumental in robotics and aerospace applications such as unmanned aerial vehicles (UAVs), autonomous ground vehicles and autonomous underwater vehicles. Inertial navigation systems (INSs) can provide estimates of the pose and linear velocity by integrating the kinematic equations of motion using the measurements of an IMU (gyroscope and accelerometer). However, these estimates are not reliable in practical applications involving inaccurate initial conditions, measurement noise and sensor biases. To address this issue, additional sensors, providing partial or full position information, such as Global Positioning Systems (GPSs), range sensors, and vision sensors, are used along with INSs to achieve an accurate and reliable estimation of the pose and linear velocity of the rigid body system in motion. 

Vision-aided INSs, combining an IMU and a vision sensor, have gained significant popularity in recent years, especially in applications in GPS-denied environments. Vision sensors can be categorized into three main categories: monocular cameras, stereo cameras (two cameras in a stereo setup), and RGB-D cameras (RGB camera and a depth sensor). Most of the existing estimation approaches for vision-aided INSs in the literature are of Kalman-type, such as the Extended Kalman filter (EKF) \cite{mourikis2007multi} and the invariant EKF (IEKF) \cite{barrau2017invariant}. Although these Kalman-type filters have been widely implemented in many practical applications, extra care must be taken as these filters offer only local stability guarantees due to the local linearizations. In recent years, nonlinear geometric pose observers, with provable strong stability guarantees, have been proposed in the literature, for instance, \cite{vasconcelos2010nonlinear,hua2011observer,hua2015gradient,wang2018hybrid}. By virtue of the invariance of the pose dynamics on $\SE(3)$ (or $\SO(3)\times \mathbb{R}^3$), gradient-like pose observers on $\SE(3)$ using the group velocity (angular velocity and body-frame linear velocity) and landmark position measurements have been proposed in \cite{hua2011observer,hua2015gradient} with AGAS guarantees, \ie, the estimated pose converges asymptotically to the actual one from almost all initial conditions except from a set of zero Lebesgue measure. Note that $\SE(3)$ (or $\SO(3)$) is a Lie group, which is not homeomorphic to the vector space $\mathbb{R}^n$, and therefore AGAS is the strongest result one can achieve via these time-invariant smooth observers. To achieve global pose estimation, hybrid observers on $\SE(3)$, endowed with global asymptotic stability (GAS) guarantees, have been proposed in \cite{wang2018hybrid}. However, these existing pose observers require the measurements of the linear velocity and the three-dimensional (3D) landmark positions. This requirement presents practical challenges for real-time implementation, particularly in the context of low-cost and small-scale vehicles navigating in GPS-denied environments. 

In practice, obtaining the linear velocity, whether in the inertial frame or the body frame, in GPS-denied environments, is challenging. Alternatively, simultaneous estimation of the pose and linear velocity can be obtained using IMU and landmark position measurements  \cite{barrau2017invariant,hua2018riccati,wang2020nonlinear,wang2020hybrid}. Although the body-frame 3D landmark position measurements can be computed from data collected by a stereo camera or an RGB-D camera, obtaining the landmark position measurements from a monocular camera is not an easy task. Due to the lack of depth information, a monocular camera provides only bearing measurements. To address this challenge, the problem of pose and linear velocity estimation, using IMU and bearing measurements, has been considered, for instance, in \cite{marco2020position,berkane2021nonlinear,wang2022nonlinear}. 
Furthermore, it is of great importance, from both theoretical and practical perspectives, to develop state observers that can simultaneously estimate the pose and linear velocity as well as the landmark positions using IMU and monocular bearing measurements. In \cite{batista2015navigation}, a Kalman filter has been employed to estimate the position, linear velocity, and landmark ranges, relying on bearing measurements and prior knowledge of the attitude. In \cite{yi2022globally}, the authors addressed the problem of pose, velocity, and landmark position estimation using a parameter estimation-based observer (PEBO).  

In the present work, we consider the estimation problem of the pose and linear velocity of a rigid body as well as the landmark positions, using IMU and monocular bearing measurements. We first propose a GES-LTV observer on the vector space $\mathbb{R}^{3N+6}$ ($N$ denoting the number of landmarks) for the estimation of linear velocity and the landmark positions. Then, based on the estimated velocity and landmark positions as well as some landmarks known in the inertial frame, we propose a nonlinear pose observer on $\SO(3)\times \mathbb{R}^3$. The overall estimation scheme is shown to be AGAS using the notion of almost global ISS \cite{angeli2004almost}.

\section{Mathematical Preliminaries}
\subsection{Notations and Definitions}
The sets of real, non-negative real, and natural numbers are denoted by $\mathbb{R}$, $\mathbb{R}_{\geq 0}$ and $\mathbb{N}$, respectively. We denote by $\mathbb{R}^n$ the $n$-dimensional Euclidean space, and by $\mathbb{S}^{n-1}$ the set of unit vectors in $\mathbb{R}^{n}$. The trace function is denoted by $\tr(\cdot)$, and the Euclidean inner product of two matrices $A, B \in \mathbb{R}^{m\times n}$ is denoted by $\langle \langle A, B \rangle \rangle = \tr(A\T B)$. 
Let $I_n$ denote  the $n$-by-$n$ identity matrix and $0_{m\times n}$ denote the $m$-by-$n$ zero matrix.
For a given matrix $A\in \mathbb{R}^{n\times n}$, we define $\mathcal{E}(A)$ as the set of all unit-eigenvectors of $A$ and $\lambda_{m}^A$ as the minimum eigenvalue  of $A$. 
The \textit{Special Orthogonal group} of order three, denoted by $\SO(3)$, is defined as
$\SO(3):=\{R\in \mathbb{R}^{3}, RR\T = R\T R=I_3, \det(R)=+1\}$. The \textit{Lie algebra} of $\SO(3)$ is given by $\mathfrak{so}(3):=\{\Omega\in \mathbb{R}^{3}: \Omega = -\Omega\T\}$. 
Let $(\cdot)^\times: \mathbb{R}^3 \mapsto \mathfrak{so}(3)$ denote the skew-symmetric map
such that $x^\times y = x \times y $, for any $x,y \in \mathbb{R}^3$, with $\times$ denoting the vector cross-product on $\mathbb{R}^3$. For a given matrix $A=[a_{ij}]_{3\times 3} \in \mathbb{R}^{3\times 3}$, we define $\psi: \mathbb{R}^{3\times 3} \to \mathbb{R}^3$ 
\begin{equation} \label{eqn:def_psi}
	\psi(A) := \frac{1}{2}[a_{32}-a_{23}, a_{13}-a_{31}, a_{21}-a_{12}]\T  
\end{equation} 
such that $\langle \langle A, u^\times \rangle\rangle = \tr(A\T u^\times) = 2u\T \psi(A)$ and $\psi(A)=-\psi(A\T)$ for any $A\in \mathbb{R}^{3\times 3}, u\in \mathbb{R}^3$. 
We introduce the following orthogonal projection operator $\pi: \mathbb{S}^2\to \mathbb{R}^{3\times 3}$:
\begin{align}
	\pi(x) = I_3-xx\T, \quad \forall x\in \mathbb{S}^2  \label{eqn:def_pi_x}
\end{align}
which projects any vector in $\mathbb{R}^3$ onto the plane orthogonal to $x$. One can verify that $\pi(x)$ is symmetric positive semi-definite and bounded, $\pi(x) y = 0$ if $x$ and $y$ are collinear, and $\pi^2(x) = \pi(x)$ and $R\pi(x)R\T = \pi(Rx)$ for all $x\in \mathbb{S}^2, R\in \SO(3)$.
For the sake of simplicity, the argument of the time-dependent signals is omitted unless otherwise required for the sake of clarity.    

\section{Problem Formulation}

\subsection{Kinematic Model}
Let $\{\I\}$ and $\{\B\}$ be the inertial and body frames, respectively. The pose of a rigid body is denoted by the pair $(R,p)\in \SO(3)\times \mathbb{R}^3$ where $R$ is the rotation matrix denoting the attitude of the body-attached frame $\{\B\}$ with respect to frame $\{\I\}$, and $p$ is the position of the rigid body expressed in frame $\{\I\}$.
The kinematic equations of a rigid body navigating in 3D space are given by
\begin{subequations}\label{eqn:kinematic}
	\begin{align} 
		\dot{R} & = R \omega^\times  \label{eqn:dot_R} \\
		\dot{p} & = Rv  \label{eqn:dot_p} \\
		\dot{v} & = -\omega^\times v  + R\T g + a \label{eqn:dot_v}
	\end{align}
\end{subequations}
where $\omega,v\in \mathbb{R}^3$ denote the angular velocity and linear velocity of the rigid body expressed in frame $\{\B\}$, respectively, $a\in \mathbb{R}^3$ is the ``apparent acceleration'' capturing all non-gravitational forces applied to the rigid body expressed in frame $\{\B\}$, and $g\in \mathbb{R}^3$ denotes the (constant) gravity vector expressed in frame $\{\I\}$.
We assume that $\omega$ and $a$ are continuously measurable from an onboard IMU and are uniformly bounded.

\subsection{Monocular Bearing Measurements}
We consider a family of $N\geq 3$ landmarks, denoted by $p_i$, in frame $\{\mathcal{I}\}$ for all $i\in \{1,2,\dots, N\}:=\mathbb{I}$.
The landmarks positions are expressed in frame $\{\mathcal{B}\}$ as follows:
\begin{align}
	\PB p_i & = R\T(p_i-p), \quad \forall i\in \mathbb{I} \label{eqn:def_pB}  
\end{align}
Let $\{\mathcal{C}\}$ denote the camera frame attached to the optical center of the camera (see Fig. \ref{fig:pinholemodel}), and $(R_c, p_c)\in \SO(3)\times \mathbb{R}^3$ denote the homogeneous transformation pose from frame $\{\mathcal{B}\}$ to frame $\{\mathcal{C}\}$.  
Then, the $i$-th landmark position measurement expressed in frame $\{\mathcal{C}\}$ is given as follows:
\begin{align} 
	\PC p_i & = R_c^\top (\PB p_i-p_c) . \label{eqn:def_pC}
\end{align} 
The bearing measurement of the $i$-th landmark in frame $\{\mathcal{C}\}$, denoted by $z_i \in \mathbb{S}^2$, is given as follows:
\begin{align}
	z_i :=  \frac{\PC p_i}{\|\PC p_i\|} =  \frac{ R_c^\top (\PB p_i-p_c)}{\|\PB p_i-p_c\|}\label{eqn:bearing_def}
\end{align}
for all $i\in \mathbb{I}$. In practical applications, the pose $(R_c, p_c)$ is constant and can be obtained from an appropriate offline Camera-IMU calibration (see, for instance, \cite{wu2021simultaneous}). 
\begin{figure}[!ht]
	\centering
	\includegraphics[width=0.7\linewidth]{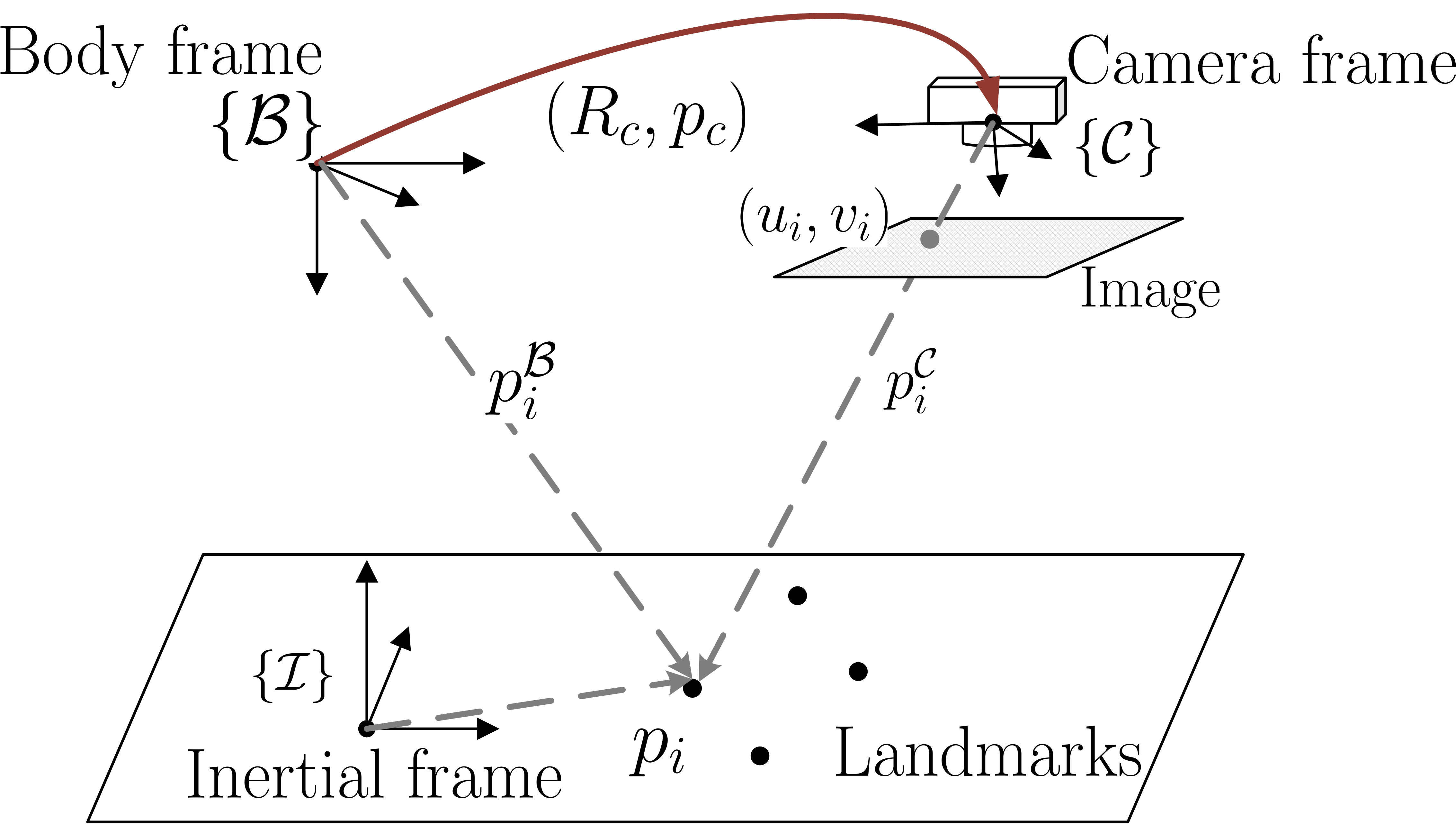}
	\caption{Pinhole model of a monocular camera and coordinate systems of the inertial frame, body frame and camera frame.}
	\label{fig:pinholemodel}
\end{figure}

\begin{rem}
	In practice, the bearing measurements $z_i$ can be constructed directly from the visual measurements of a monocular camera. Let $(u_i,v_i)$ denote the pixel measurements of the $i$-th landmark. Suppose that the camera is well calibrated, the visual measurement can be modeled from the pinhole camera model as follows:    
	\begin{align}
		z'_i
		:=\begin{bmatrix}
			u_i \\
			v_i \\
			1
		\end{bmatrix}
		= \mathcal{K}  \begin{bmatrix}
			\PC p_{iX}/ \PC p_{iZ} \\
			\PC p_{iY}/ \PC p_{iZ}\\
			1
		\end{bmatrix}
		= \frac{1}{\PC p_{iZ}} \mathcal{K}  \PC p_i 
		\label{eqn:pixel_measurement}
	\end{align}
	where $\PC p_{i} = [\PC p_{iX}, \PC p_{iY}, \PC p_{iZ}]\T$ is the camera-frame landmark position defined in \eqref{eqn:def_pC}, $\PC p_{iZ}$ is known as the depth between the $i$-th landmark and the camera, and $\mathcal{K} $ denotes the camera intrinsic matrix (including the focal length and principal points). Then, the bearing measurement $z_i$ defined in \eqref{eqn:bearing_def} can be computed directly from the visual measurement $z_i'=[u_i, v_i, 1]\T$ as follows:
	\begin{align}
		z_i = 
		\frac{\mathcal{K}^{-1}z'_i}{\|\mathcal{K}^{-1}z'_i\|}, \quad \forall i\in \mathbb{I}.  \label{eqn:def_bearing_pixel}
	\end{align} 
\end{rem}

From \eqref{eqn:bearing_def}, we consider the following  modified output for each  bearing measurement $z_i, i\in \mathbb{I}$ as
\begin{align}
	y_i := \pi(R_c z_i) p_c   = \Pi_{z_i} \PB p_i  \label{eqn:def_y_i}
\end{align}
where $\Pi_{z_i}:= \pi(R_c z_i)$, and we have made use of the facts $\pi(R_c z_i) = R_c \pi(z_i) R_c\T$ and $\pi(R_c z_i)   (\PB p_i-p_c) =0$. 
Due to the lack of depth information in the measurements of a monocular camera (\ie, the monocular-bearing measurement $z_i$), the following persistency of excitation (PE) condition on each bearing measurement is needed for the estimation of landmark positions. 
\begin{assum}\label{assum:PE}
	For each bearing measurement $z_i,   i\in \mathbb{I}$, there exist constants $\delta_o, \mu_o>0$ such that
	\begin{equation}\label{eqn:PE}
		\int_{t}^{t+\delta_o} \pi(z'_i(\tau))   d \tau> \mu_o I_3 
	\end{equation}
	for all $t\geq 0$ with $z_i':= R R_cz_i \in \mathbb{S}^2$ and $\pi(\cdot)$ defined in \eqref{eqn:def_pi_x}.
\end{assum}
\begin{rem}
	From the definitions of $\Pi_{z_i}$ in \eqref{eqn:def_y_i} and $\pi(\cdot)$ in \eqref{eqn:def_pi_x}, one obtains $\pi(z'_i) = \pi(R R_cz_i) = R \Pi_{z_i} R\T$. 
	From \eqref{eqn:def_pB}-\eqref{eqn:bearing_def}, the modified bearing vector $z_i'$ can be rewritten as $z_i' = R R_c \PC p_i / \|R R_c \PC p_i\|$ with $R R_c \PC p_i = p_i -p -Rp_c$. This implies that $z_i'$ is a unit vector expressed in the inertial frame, which points to the position of the $i$-th landmark from the position of the camera. Therefore, Assumption \ref{assum:PE} indicates that the camera (attached to the rigid body) is in motion and is not indefinitely moving in a straight line passing through any landmark. Moreover, similar to the conditions in \cite[Lemma 3]{wang2022nonlinear} with continuous bearing vector $z_i'(t)$, Assumption \ref{assum:PE} also implies that there exist constants $\epsilon, \delta_o > 0$ such that for any $t\geq 0$, there exists some time $t^* \in (t,t+\delta_o)$ ensuring that $\|z_i'(t) \times z_i'(t^*)\|\geq \epsilon$. 
\end{rem}

The main objective of this work is to design a nonlinear estimation scheme to simultaneously estimate the pose and linear velocity of a rigid body system in \eqref{eqn:kinematic} as well as the landmark positions, with AGAS guarantees, using the IMU measurements ($\omega, a$) and the monocular bearing measurements ($z_1, z_2, \dots, z_N$).

\section{Observer Design}\label{sec:observer_design}

In this section, we propose a nonlinear estimation scheme consisting of an LTV observer on $\mathbb{R}^{3N+6}$ for the body-frame linear velocity and the landmark positions and a nonlinear pose observer on $\SO(3)\times \mathbb{R}^3$. The LTV observer relies on the IMU and monocular bearing measurements, and the pose observer relies on the angular velocity measurements and the estimated body-frame linear velocity and landmark positions. The structure of the proposed estimation scheme is shown in Fig. \ref{fig:estimation_scheme}.
\begin{figure}[!ht]
	\centering
	\includegraphics[width=0.95\linewidth]{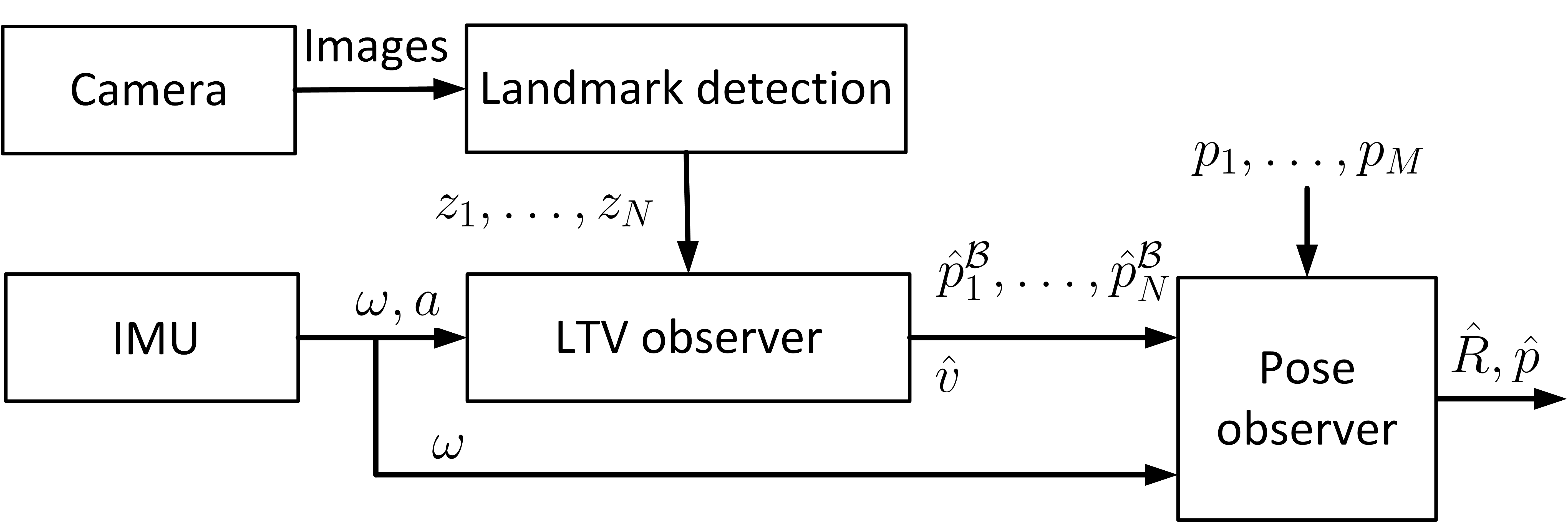}
	\caption{Structure of the proposed estimation scheme for pose, velocity and landmark position estimation.}
	\label{fig:estimation_scheme}
\end{figure}

\subsection{Landmark Position and Velocity Estimation} 
Consider the body-frame gravity vector, denoted by $\eta=R\T g$, as an additional state. 
From the dynamics \eqref{eqn:kinematic}, the body-frame landmark position $\PB p_i$ in \eqref{eqn:def_pB} and the modified output in \eqref{eqn:def_y_i}, one obtains the following dynamic system: 
\begin{align}\label{eqn:velocity_landmark}  
	\begin{cases}
		\PB\dot{p}_i  = -\omega^\times \PB p_i -  v \\		
		~~\dot{v} ~  = -\omega^\times v  +  \eta  + a \\
		~~\dot{\eta} ~ = -\omega^ \times \eta \\
		~~y_i  = \Pi_{z_i} \PB p_i 
	\end{cases}
\end{align}
for all $i\in \mathbb{I}$. Define  $x_L:= [\PB p_1\T, \PB p_2\T, \dots, \PB p_N\T]\T \in \mathbb{R}^{3N}$ and $y: =[y_1\T, y_2\T, \dots, y_N\T]\T \in \mathbb{R}^{3N}$ as the vectors of body-frame landmark positions and modified output measurements, respectively.
From \eqref{eqn:velocity_landmark}, the dynamics of the landmark positions and the output measurements can be written as 
\begin{align}\label{eqn:p_v_eta}
	\begin{cases}
		\dot{x}_L = -A_\omega x_L- \Gamma  v  	\\ 
		~y ~ = \Lambda_z x_L
	\end{cases}
\end{align}
where the matrices $A_\omega\in \mathbb{R}^{3N \times 3N}$, $\Lambda_z  \in \mathbb{R}^{3N\times 3N}$ and   $\Lambda \in \mathbb{R}^{3N \times 3}$  are defined as follows:
\begin{subequations} \label{eqn:def_A_Gamma_Lambda}
	\begin{align}
		A_\omega &:= \blkdiag(\omega^\times, \dots, \omega^\times), ~~
		\Gamma  := [I_3, \dots, I_3]\T,  \\
		\Lambda_z &:= \blkdiag(\Pi_{z_1}, \Pi_{z_2},, \dots,\Pi_{z_N}). 
	\end{align}  
\end{subequations} 
Letting $x := [x_L\T, v\T, \eta\T]\T \in \mathbb{R}^{3N+6}$ as the extended state, one can rewrite \eqref{eqn:velocity_landmark} and \eqref{eqn:p_v_eta} as the following LTV system:
\begin{align} \label{eqn:ltv}
	\begin{cases}
		\dot{x} = A x +  B a  \\
		y = Cx
	\end{cases}
\end{align}
with the matrices $A\in \mathbb{R}^{(3N+6)\times (3N+6)}, B\in \mathbb{R}^{(3N+6)\times 3}$ and $C\in \mathbb{R}^{3N\times (3N+6)}$ defined as
\begin{subequations}\label{eqn:ltv_ABC}
	\begin{align}
		A &= \begin{bmatrix}
			-A_\omega & -\Gamma  & 0_{3N\times 3} \\
			0_{3 \times 3N} & -\omega^\times & I_3 \\
			0_{3 \times 3N} & 0_{3\times 3} & -\omega^\times
		\end{bmatrix},  B = \begin{bmatrix}
			0_{3N\times 3} \\
			I_{3}  \\
			0_{3 \times 3}  
		\end{bmatrix} \\
		C & = \begin{bmatrix}
			\Lambda_z & 0_{3N \times 3} & 0_{3N \times 3}
		\end{bmatrix}.
	\end{align}
\end{subequations}    
Note that the pair $(A, C)$ is time-varying, where the matrix $A$ relies on the angular velocity $\omega$, and the matrix $C$ related to the time-varying bearing measurements ($z_1,z_2,\dots, z_N$).
\begin{rem} 
	The introduction of the body-frame gravity vector $\eta = R\T g$ as an additional state, similar to the work in \cite{wang2020nonlinear}, which plays a crucial role in two aspects: 1) it decouples the dynamics of the landmark positions and linear velocity from the attitude, which leads to a nice simplified LTV system \eqref{eqn:ltv}; 2) it enables handling the situations where the gravity vector $g$ is unknown or not precisely known.  
\end{rem}

This subsection aims to design an observer to estimate the body-frame landmark positions and the body-frame linear velocity of the rigid body for the LTV system \eqref{eqn:ltv}. Let $\hat{x}_L:= [\PB \hat{p}_1\T, \dots, \PB \hat{p}_N\T]\T$ denote the estimate of the vector of the body-frame landmark positions $x_L$, $\hat{v}$ denote the estimate of the body-frame linear velocity $v$, and $\hat{\eta}$ denote the estimate of the body-frame gravity vector $\eta$. For the sake of simplicity, define $\hat{x} := [\hat{x}_L\T, \hat{v}\T, \hat{\eta}\T]\T $ as the estimate of the state $x$. Then, we propose the following linear Riccati observer:
\begin{align}
	\dot{\hat{x}} &= A \hat{x} +  B a + K(y-C\hat{x}) \label{eqn:observer_ltv}  
\end{align}
with $\hat{x}(0)\in \mathbb{R}^{3N+6}$. The gain matrix $K$ is designed as:
\begin{align} \label{eqn:def_K}
	K = PC\T Q
\end{align} 
where $P$ is the solution to the continuous Riccati equation (CRE)
\begin{align} 
	\dot{P} &= AP + PA\T - PC\T QCP + V \label{eqn:CRE}
\end{align}
with $P(0) \in \mathbb{R}^{(3N+6)\times (3N+6)}$ being symmetric positive definite, $Q \in \mathbb{R}^{3N \times 3N} $ and $V \in \mathbb{R}^{(3N+6)\times (3N+6)}$ being continuous, bounded and uniformly positive definite.

\begin{lem}\label{lem:UO}
	Suppose that $\omega(t)$ is uniformly continuous and bounded, and Assumption \ref{assum:PE} holds. Then, the pair $(A, C)$ defined in \eqref{eqn:ltv_ABC} is uniformly observable. 
\end{lem}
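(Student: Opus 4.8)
The plan is to establish uniform observability of the LTV pair $(A,C)$ by constructing an explicit lower bound on the observability Gramian over a finite window, exploiting the block-triangular structure of $A$ and the PE condition of Assumption~\ref{assum:PE}. First I would pass to the transition matrix $\Phi(t,s)$ of $\dot{x}=Ax$. Because $A$ is block upper-triangular with diagonal blocks $-A_\omega$, $-\omega^\times$, $-\omega^\times$, the transition matrix inherits a block upper-triangular form; moreover each diagonal block is driven by $\omega^\times$, so $\Phi$ is built from the rotation matrix $R_\omega(t,s)$ solving $\dot{R}_\omega = -\omega^\times R_\omega$ (an orthogonal, hence norm-preserving, factor) together with integral coupling terms coming from $-\Gamma$ and $I_3$. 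Uniform boundedness of $\omega$ guarantees $\Phi(t,s)$ and its inverse are uniformly bounded on any window of fixed length $\delta$, which is the standard prerequisite for the Gramian argument.

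Next I would examine the output along trajectories: $C\Phi(t,s)x_0$ has $i$-th block $\Pi_{z_i}(t)\,[\PB p_i(t)]$, where $\PB p_i(t)$ evolves backward/forward from the initial data according to the first block row, i.e. $\PB p_i(t) = R_\omega(t,s)\PB p_i(s) - \big(\text{integral terms involving } v_0,\eta_0\big)$. The key observation is that $\Pi_{z_i} = \pi(R_c z_i)$ and $R\Pi_{z_i}R\T = \pi(z_i')$, so after left-multiplying by $R$ the measured quantity is $\pi(z_i')$ acting on the inertial-frame vector $R\,\PB p_i$. Assumption~\ref{assum:PE} states $\int_t^{t+\delta_o}\pi(z_i'(\tau))\,d\tau > \mu_o I_3$, which means the directions blocked by $\pi(z_i')$ vary enough that no nonzero constant inertial vector can stay in the kernel of $\pi(z_i'(\tau))$ for the whole window. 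Combining this with the explicit propagation of $\PB p_i$, $v$, $\eta$, I would show that if the Gramian applied to $x_0=[x_{L,0}\T,v_0\T,\eta_0\T]\T$ is small, then first the landmark components $\PB p_i(s)$ must be small (directly from the PE bound applied to the essentially-constant-over-a-short-window inertial landmark vectors), and then — using the coupling terms and a second, possibly longer, window — the velocity $v_0$ and gravity $\eta_0$ components must be small as well. This staged argument (recover $x_L$ first, then $v$, then $\eta$, each step feeding the next) mirrors the cascade structure already visible in \eqref{eqn:ltv_ABC}.

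Concretely, I would either (a) invoke a known uniform-observability lemma for Riccati observers (e.g. the decomposition results used in \cite{hua2018riccati} or \cite{wang2022nonlinear}) that reduces uniform observability of the full pair to a persistency-of-excitation condition on a reduced output map, or (b) give the Gramian bound by hand: pick $\delta = k\delta_o$ for a suitable integer $k$, write $\int_s^{s+\delta}\|C\Phi(\tau,s)x_0\|^2 d\tau$, use $\|\pi(z_i')(\cdot)\| $ bounds together with the near-constancy of $R_\omega$-propagated vectors over sub-windows of length $\delta_o$, and conclude a bound of the form $\ge \beta\|x_0\|^2$ with $\beta>0$ independent of $s$. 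The main obstacle I anticipate is the second stage: decoupling $v_0$ and $\eta_0$ from the landmark data. Since $v$ enters every landmark equation through the same matrix $\Gamma=[I_3,\dots,I_3]\T$ and $\eta$ enters only $v$'s equation, showing that the integral coupling terms genuinely excite all six of these coordinates requires care — one must use the time variation of $R_\omega$ (equivalently, that $\omega$ together with the bearing geometry makes the effective regressor full rank) and possibly the $N\ge 3$ landmark assumption to get enough independent equations. Uniform continuity of $\omega$ will be used here to ensure the sub-window approximations (treating $R_\omega$ as nearly constant, bearings as nearly constant) carry uniform error estimates, closing the argument.
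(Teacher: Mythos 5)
Your overall strategy---bound the observability Gramian over a finite window, exploit the block-triangular structure, and reduce to the PE condition of Assumption~\ref{assum:PE} via a known lemma for nilpotent-drift LTV pairs---is the paper's route, and your option~(a) even names the right reference (\cite{wang2022nonlinear}). However, the decisive idea is missing from your sketch: the paper uses the \emph{exact} change of coordinates $\mathsf{T}(t)=\blkdiag(R\T,\dots,R\T)$, which satisfies $\dot{\mathsf{T}}=-\Omega\mathsf{T}$ and commutes with the constant coupling part $\bar{A}$ of $A=\Omega+\bar{A}$, so that the transition matrix factors exactly as $\Phi(t,\tau)=\mathsf{T}(t)\exp(\bar{A}(t-\tau))\mathsf{T}\T(\tau)$ with $\bar{A}$ constant and nilpotent ($\bar{A}^3=0$). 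Since $\mathsf{T}$ is orthogonal, uniform observability of $(A,C)$ is equivalent to that of the pair $(\bar{A},\check{C}(t))$ with $\check{C}(t)=\bar{\mathsf{T}}\T\Lambda_z\bar{\mathsf{T}}\,\bar{C}$, and the cited lemma reduces this to a PE condition on $\mathcal{O}(t)=[\check{C}\T,(\check{C}\bar{A})\T,(\check{C}\bar{A}^2)\T]\T=\mathcal{M}(t)\mathcal{O}_c$, where $\mathcal{M}$ is block-diagonal in the $\pi(z_i')$ and $\mathcal{O}_c$ is the \emph{constant} Kalman observability matrix of $(\bar{A},\bar{C})$. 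No sub-window ``near-constancy'' approximation of $R_\omega$ or of the bearings is needed anywhere; the rotation is removed exactly, not approximately.

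This matters because the step you flag as the ``main obstacle''---decoupling $v_0$ and $\eta_0$---is precisely what the transformation dissolves. The first three rows of $\bar{C}\bar{A}$ and of $\bar{C}\bar{A}^2$ are $[0_{3\times 3N},\,-I_3,\,0_{3\times 3}]$ and $[0_{3\times 3N},\,0_{3\times 3},\,-I_3]$, so $\mathcal{O}_c\T\mathcal{O}_c\geq I_{3N+6}$ by inspection, and the PE bound $\int_t^{t+\delta_o}\mathcal{M}\T\mathcal{M}\,d\tau>\mu_o I_{9N}$ (which is Assumption~\ref{assum:PE} applied blockwise, using $\bar{\Pi}_{z_i}\T\bar{\Pi}_{z_i}=\pi(z_i')$) then gives the Gramian lower bound directly. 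In particular $v$ and $\eta$ are recovered from a \emph{single} landmark's PE bearing: neither the $N\geq 3$ count, nor any ``time variation of $R_\omega$'' or full-rank regressor argument, nor uniform continuity of $\omega$ beyond what the cited lemma requires, plays a role in this stage. As written, your proposal defers exactly this step to ``requires care'' and proposes tools (staged smallness arguments, sub-window approximations with uniform error estimates, extra landmark-count hypotheses) that are either unnecessary or would make the uniformity in $t$ substantially harder to establish; the argument does not close without the exact similarity transformation or an equivalent device.
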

\begin{proof}
	See Appendix \ref{sec:UO}.
\end{proof}
\begin{rem}
	Note that the matrix $A(t)$ is continuous and uniformly bounded since $\omega$ is continuous and uniformly bounded by assumption. Moreover, the matrix $C(t)$ is also continuous and uniformly bounded since $z_i$ is continuous and $\Pi_{z_i}$ is naturally bounded by definition. Therefore, according to \cite{wang2020nonlinear},   
	Lemma \ref{lem:UO} ensures the existence of the solution $P(t)$ to the CRE \eqref{eqn:CRE} that satisfies $p_m I_{3N+6}< P(t) < p_M I_{3N+6}$ for all $t\geq 0$ with some constant  scalars $0<p_m<p_M<\infty$.
\end{rem}

Define $\tilde{x}:= x -\hat{x} $ as the estimation error, and let $\tilde{x}=[\tilde{x}_L\T, \tilde{v}\T, \tilde{\eta}\T]\T $  with $\tilde{v}:= v - \hat{v}$, $\tilde{\eta}:= \eta - \hat{\eta}$ and $\PB \tilde{p}_i:= \PB p_i  
 - \PB \hat{p}_i$ denoting the estimation errors of velocity, gravity vector and landmark positions, respectively. Then, one obtains the following closed-loop system
\begin{align}
	\dot{\tilde{x}} &= (A-KC) \tilde{x}   \label{eqn:error_ltv}  
\end{align}
with $(A, C)$ defined in \eqref{eqn:ltv_ABC} and $K$ defined in \eqref{eqn:def_K}.  
\begin{prop}\label{prop:x_GES}
	Consider the closed-loop system \eqref{eqn:error_ltv} with the pair $(A, C)$ defined in \eqref{eqn:ltv_ABC} and the gain matrix $K$ defined in \eqref{eqn:def_K}. Suppose that the conditions in Lemma \ref{lem:UO} are satisfied, and $Q(t), V(t)$ are chosen to be continuous, bounded, and uniformly positive definite. Then, the equilibrium $(\tilde{x}=0)$ is uniformly globally exponentially stable, \ie, there exist constant scalars $\alpha, \lambda>0$ such that $\|\tilde{x}(t)\| \leq \alpha \exp(-\lambda t) \|\tilde{x}(0)\|$ for all $t\geq 0$.
\end{prop}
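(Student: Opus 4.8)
The plan is to prove uniform global exponential stability through a Lyapunov function built from the inverse of the Riccati matrix $P(t)$. The first step is to secure well-posedness and uniform bounds on $P$: since $(A,C)$ is uniformly observable by Lemma~\ref{lem:UO}, $A$ and $C$ are continuous and uniformly bounded, and $Q,V$ are continuous, bounded and uniformly positive definite, the standard theory of the continuous Riccati equation for uniformly observable pairs (as recalled in the remark following Lemma~\ref{lem:UO}, see \cite{wang2020nonlinear}) guarantees that the solution $P(t)$ of \eqref{eqn:CRE} exists for all $t\geq 0$ and satisfies $p_m I_{3N+6}< P(t) < p_M I_{3N+6}$ for some constants $0<p_m\leq p_M<\infty$; hence $P(t)^{-1}$ exists and obeys $p_M^{-1} I_{3N+6}< P(t)^{-1} < p_m^{-1} I_{3N+6}$.

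Next I would take the candidate $\mathcal{W}(t,\tilde{x}):=\tilde{x}\T P(t)^{-1}\tilde{x}$, so that $p_M^{-1}\|\tilde{x}\|^2\leq \mathcal{W}\leq p_m^{-1}\|\tilde{x}\|^2$. Differentiating along \eqref{eqn:error_ltv}, using $\tfrac{d}{dt}P^{-1}=-P^{-1}\dot{P}P^{-1}$ together with the CRE \eqref{eqn:CRE}, and substituting the explicit gain $K=PC\T Q$ (with $Q=Q\T$), the cross terms satisfy $P^{-1}KC=C\T QC$ and $C\T K\T P^{-1}=C\T QC$, which cancel most of $\tfrac{d}{dt}P^{-1}$ and leave the clean identity
\begin{align}
	\dot{\mathcal{W}}=-\tilde{x}\T\!\left(C\T QC+P^{-1}VP^{-1}\right)\tilde{x}.
\end{align}
Since $C\T QC$ is positive semi-definite and $P^{-1}VP^{-1}> v_m p_M^{-2} I_{3N+6}$, where $v_m I_{3N+6}< V$, this gives $\dot{\mathcal{W}}\leq -v_m p_M^{-2}\|\tilde{x}\|^2\leq -(v_m p_m/p_M^2)\,\mathcal{W}$.

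Finally, the comparison lemma yields $\mathcal{W}(t)\leq \mathcal{W}(0)\exp\!\big(-(v_m p_m/p_M^2)\,t\big)$, and passing back through the two-sided bounds on $\mathcal{W}$ gives $\|\tilde{x}(t)\|\leq \sqrt{p_M/p_m}\,\exp\!\big(-(v_m p_m/2p_M^2)\,t\big)\,\|\tilde{x}(0)\|$, \ie, the claimed UGES with $\alpha=\sqrt{p_M/p_m}$ and $\lambda=v_m p_m/(2p_M^2)$. The only genuinely delicate point is obtaining the uniform lower and upper bounds on $P(t)$ used above; this is exactly where uniform observability (Lemma~\ref{lem:UO}) and the uniform bounds on $A,C,Q,V$ are indispensable, and I would invoke the corresponding Riccati result rather than reprove it. Everything after that is a routine quadratic-form computation.
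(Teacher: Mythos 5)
Your proposal is correct and follows essentially the same route as the paper: the uniform bounds $p_m I< P(t)< p_M I$ from uniform observability, the Lyapunov function $\tilde{x}\T P^{-1}\tilde{x}$, the cancellation via the CRE and $K=PC\T Q$ leaving $\dot{V}=-\tilde{x}\T(C\T QC+P^{-1}VP^{-1})\tilde{x}$, and the same constants $\alpha=\sqrt{p_M/p_m}$, $\lambda=v_m p_m/(2p_M^2)$. No gaps.
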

\begin{proof} 
	See Appendix \ref{sec:prop_x_GES}.
\end{proof}

From \eqref{eqn:velocity_landmark} and \eqref{eqn:p_v_eta}, the proposed LTV observer \eqref{eqn:observer_ltv} can be explicitly rewritten as follows:
\begin{subequations}
	\label{eqn:observer_PV1}
	\begin{align} 
		\PB\dot{\hat{p}}_i & = -\omega^\times \PB\hat{p}_i -  \hat{v} +  K_i (y-C\hat{x})\\
		\dot{\hat{v}}~     & = -\omega^\times \hat{v}  + \hat{\eta} +  a  + K_v (y-C\hat{x})\\
		\dot{\hat{\eta}}~  & = -\omega^\times \hat{\eta} +  K_\eta (y-C\hat{x})
	\end{align}
\end{subequations}
for all  $i\in \mathbb{I}$ with $K = [K_1\T, \cdots, K_N\T, K_v\T, K_\eta\T] \T $ and $K_i, K_v, K_\eta \in \mathbb{R}^{3\times 3N}$.	 

\subsection{Pose Estimation}
In this subsection, we consider the problem of pose estimation relying on the gyro measurements and the estimated velocity and body-frame landmark positions, assuming that $M$ ($3 \leq M\leq N$) landmarks are known in the inertial frame. Without loss of generality, suppose that the first $M$ landmarks, \ie, $\mathbb{I}':=\{1,2,\dots, M\}$, are known and satisfy the following assumption:
\begin{assum}\label{assum:pose}
	Among the $M \geq 3$ landmarks known in the inertial frame, there exist at least 3 landmarks that are non-aligned.
\end{assum} 
\begin{rem}
	Assumption \ref{assum:pose} requires a minimum of 3 non-aligned landmarks known in the inertial frame, which is a common assumption in pose estimation problems using 3D landmark position measurements, see for instance \cite{hua2015gradient,wang2018hybrid}.
\end{rem}

Let $\hat{R}$ and $\hat{p}$ denote the estimates of the attitude $R$ and position $p$, respectively. We propose the following pose observer on $\SO(3)\times \mathbb{R}^3$:
\begin{subequations}\label{eqn:observer_pose}
	\begin{align}
		\dot{\hat{R}} & = \hat{R}(\omega  + k_R \hat{R}\T \sigma_R)^\times \\
		\dot{\hat{p}} & = \hat{R} \hat{v} + (k_R \sigma_R)^\times (\hat{p}-p_o) + k_p \sigma_p
	\end{align}
\end{subequations}
where $(\hat{R}(0), \hat{p}(0))\in \SO(3)\times \mathbb{R}^3$ and $p_o := \sum_{i\in \mathbb{I}'} \rho_i p_i$ denotes the weighted center of the $M$ landmarks with the scalars  $0<\rho_i<1$ and $\sum_{i\in \mathbb{I}'}\rho_i = 1$. The innovation terms $\sigma_R$ and $\sigma_p$ are defined as:
\begin{align}\label{eqn:def_innovation} 
	\sigma_R    = \frac{1}{2} \sum_{i\in \mathbb{I}'}\rho_i \nu_i \times \xi_i, \quad 
	\sigma_p    =   \sum_{i\in \mathbb{I}'}\rho_i  \xi_i 
\end{align}
with $\nu_i := p_i - p_o$ and $\xi_i  := p_i-\hat{p}-\hat{R}~\PB \hat{p}_i$.
Let $\tilde{R} = R\hat{R}\T$ and $\tilde{p} =   p - \tilde{R} \hat{p} - (I_3 - \tilde{R})p_o$ denote the estimation errors of the attitude and the position, respectively. One verifies that $(\hat{R}=R, \hat{p} = p)$ when  $(\tilde{R}=I_3, \tilde{p} = 0)$. 
Then, from \eqref{eqn:def_pB} and the definition of $\tilde{p}$, one has
$ \hat{R}~ \PB \hat{p}_i = \tilde{R}\T(p_i - p) + \hat{R}~ \PB \tilde{p}_i$ and $\xi_i = \tilde{R}\T \tilde{p} + (I_3 - \tilde{R}\T) \nu_i + \hat{R} ~\PB \tilde{p}_i$ with $\PB \tilde{p}_i = \PB \hat{p}_i - \PB p_i  $ denoting the landmark position estimation error. Then, one obtains  
\begin{subequations}\label{eqn:innovation_error}
\begin{align} 
	\sigma_R & =  \psi(\mathsf{M} \tilde{R})+  \frac{1}{2} \sum_{i\in \mathbb{I}'}\rho_i \nu_i^\times \hat{R} ~\PB \tilde{p}_i \\
	\sigma_p   
	& =   \tilde{R}\T \tilde{p}  + \hat{R} \sum_{i\in \mathbb{I}'}\rho_i  \PB \tilde{p}_i 
\end{align}
\end{subequations}
where $\mathsf{M} := \sum_{i\in \mathbb{I}'}\rho_i \nu_i \nu_i\T$, $\psi(\mathsf{M}  \tilde{R}) = -\frac{1}{2} \sum_{i\in \mathbb{I}'}\rho_i \nu_i ^\times  \tilde{R}\T \nu_i $ with $\psi(\cdot)$ defined in \eqref{eqn:def_psi}, and we used the facts $\sum_{i\in \mathbb{I}'}\rho_i  = 1$, $p_o=\sum_{i\in \mathbb{I}'}\rho_i p_i$ and $\nu_i ^\times \nu_i =0$. Let $\bar{\mathsf{M} }: = \frac{1}{2}(\tr(\mathsf{M} )I_3 - \mathsf{M} ) = -\sum_{i\in \mathbb{I}'} \rho_i (\nu_i^\times )^2$. By Assumption \ref{assum:pose}, it is always possible to tune the scalars $\rho_i \in (0,1), \forall i\in \mathbb{I}'$ such that $\bar{\mathsf{M} }$ is positive definite with three distinct eigenvalues \cite{tayebi2013inertial}. 
Then, from \eqref{eqn:kinematic}, \eqref{eqn:observer_pose} and \eqref{eqn:innovation_error} as well as the definition of $\tilde{x}$, one obtains the following pose error dynamics:
\begin{subequations}\label{eqn:error_pose}  
	\begin{align} 
		\dot{\tilde{R}}  
		& =  \tilde{R}  ( -k_R \psi(\mathsf{M} \tilde{R}) + \Gamma_1\tilde{x} )^\times  \\
		\dot{\tilde{p}}  
		& = - k_p \tilde{p}  +   \Gamma_2\tilde{x}  
	\end{align}
\end{subequations}
where the matrices $\Gamma_1, \Gamma_2 \in \mathbb{R}^{3\times (3N+6)}$ are defined as
\begin{align} 
	\Gamma_1  &: = \frac{k_R}{2} [\rho_1\nu_1^\times \hat{R},\cdots,  \rho_M\nu_M^\times \hat{R},   0_{3\times 3(N-M+2)}], \nonumber \\
	\Gamma_2 &: =  [k_p\rho_1{R},\cdots, k_p\rho_M{R}, 0_{3\times 3(N-M)},R,0_{3\times 3}]. \label{eqn:def_Gamma}
\end{align}
Note that the matrices $\Gamma_1$ and $\Gamma_2$ are time-varying and satisfy $\|\Gamma_1\|_F =  \frac{k_R}{2} ({\sum_{i=1}^{M}2\rho_i^2\|\nu_i\|^2})^{\frac{1}{2}}  := k_{\Gamma_1}$ and $\|\Gamma_2\|_F =    ({3(k_p^2\sum_{i=1}^{M}\rho_i^2+1)})^{\frac{1}{2}} :=k_{\Gamma_2}$ for all $t\geq 0$, \ie,  $\forall \hat{R},R \in  \SO(3)$. Consequently, it follows that $\|\Gamma_s \tilde{x}\|\leq k_{\Gamma_s}\|\tilde{x}\|$ for all $t\geq 0$ and $s\in\{1,2\}$. The introduction of the weighted center $p_o$ in the observer design is motivated by \cite{wang2018hybrid,wang2020nonlinear,wang2020hybrid} to decouple the position error dynamics and the attitude error dynamics in \eqref{eqn:error_pose}.

\begin{prop}\label{prop:ISS}
	Consider system \eqref{eqn:error_pose} on $\SO(3)\times \mathbb{R}^3 \times  \mathcal{D}$ with $\Gamma_1,\Gamma_2$ defined in \eqref{eqn:def_Gamma}. Suppose that $\mathcal{D}$ is a closed subset of $\mathbb{R}^{3N+6}$, and choose $k_R, k_p >0$ and $\bar{\mathsf{M} }$ being positive definite with three distinct eigenvalues. Then, system \eqref{eqn:error_pose} is almost globally ISS with respect to the equilibrium $(I_3, 0)$.
\end{prop}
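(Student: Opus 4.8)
The plan is to treat the attitude error dynamics as an autonomous system perturbed by the input $\tilde x$, establish that it is almost globally ISS on $\SO(3)$, then handle the position error as a cascade driven by both $\tilde R$ and $\tilde x$, and finally conclude using the almost global ISS framework of \cite{angeli2004almost}. The decoupling built into \eqref{eqn:error_pose} — the $\dot{\tilde R}$ equation does not depend on $\tilde p$ — is exactly what makes this cascade structure available, so I would exploit it from the outset.

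First I would analyze $\dot{\tilde R} = \tilde R(-k_R\psi(\mathsf{M}\tilde R) + \Gamma_1\tilde x)^\times$ with $\tilde x$ viewed as an exogenous input. For the unperturbed system ($\tilde x \equiv 0$), $\dot{\tilde R} = \tilde R(-k_R\psi(\mathsf{M}\tilde R))^\times$, I would use the standard Lyapunov function $V_R(\tilde R) = \tr((I_3 - \tilde R)\bar{\mathsf{M}})$ (or an equivalent weighted trace function), whose derivative along the unperturbed flow is $-k_R|\psi(\mathsf{M}\tilde R)|^2$ up to a positive factor; since $\bar{\mathsf{M}}$ has three distinct eigenvalues, the only critical points are $I_3$ (the global minimum) and three saddle-type equilibria corresponding to $\pi$-rotations about the eigenaxes of $\bar{\mathsf M}$, and the unstable set has zero Lebesgue measure — this is the classical almost-global result of \cite{tayebi2013inertial} type. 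I would then add the perturbation: compute $\dot V_R$ along \eqref{eqn:error_pose}, bound the cross term using $\|\Gamma_1\tilde x\| \le k_{\Gamma_1}\|\tilde x\|$, and show that outside a small neighbourhood of the undesired equilibria the decrease dominates, giving local ISS near $I_3$ and forward completeness globally; combined with reduction/robustness arguments (local asymptotic stability of $I_3$ plus a decrease property away from the undesired critical set) this yields almost global ISS of the $\tilde R$-subsystem with respect to $\tilde x$.

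Next I would handle the position error $\dot{\tilde p} = -k_p\tilde p + \Gamma_2\tilde x$. This is a linear, exponentially stable system in $\tilde p$ (gain $k_p>0$) driven by the bounded-gain input $\Gamma_2\tilde x$ with $\|\Gamma_2\tilde x\|\le k_{\Gamma_2}\|\tilde x\|$, so it is globally ISS with respect to $\tilde x$ by a trivial quadratic Lyapunov argument $V_p = \tfrac12\|\tilde p\|^2$. Then the overall system \eqref{eqn:error_pose} is a cascade: the $(\tilde R,\tilde p)$ dynamics driven by $\tilde x$, where the $\tilde R$ part is almost globally ISS and the $\tilde p$ part is globally ISS, and there is no feedback from $\tilde p$ into $\tilde R$. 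A cascade of an ISS system and an almost-globally ISS system driven by a common input is almost globally ISS — I would invoke (or briefly reprove in the reduction-theorem spirit of \cite{angeli2004almost}) this composition property, noting the undesired set of the composite system is $\{(\tilde R,\tilde p): \tilde R \text{ in the undesired critical set}, \tilde p = 0 \text{ (for } \tilde x\equiv 0)\}$, which still has zero measure in $\SO(3)\times\mathbb R^3$.

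The main obstacle is the perturbed almost-global analysis of the $\tilde R$-subsystem: "almost global asymptotic stability" is not robust in general, so I must be careful to establish the correct notion — almost global ISS in the sense of \cite{angeli2004almost} — rather than naively perturbing an AGAS estimate. Concretely, the delicate point is to show that the basin of the undesired equilibria does not "blow up" under the perturbation in a measure-theoretic sense: I would argue that the stable manifolds of the three saddle points remain measure-zero and that trajectories starting off them, under a bounded input, still converge to a neighbourhood of $I_3$ whose size is controlled by a class-$\mathcal K$ function of $\|\tilde x\|_\infty$. Handling the non-compactness introduced by $\tilde p\in\mathbb R^3$ and the fact that $\Gamma_1$ depends on the estimated $\hat R$ (time-varying but norm-bounded uniformly, as already noted after \eqref{eqn:def_Gamma}) requires only that all the gain bounds be uniform in $t$, which the excerpt has already arranged; so the crux really is the robust almost-global argument on $\SO(3)$.
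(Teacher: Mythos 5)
Your overall architecture is the right one and matches the paper's in spirit (zero-input Lyapunov analysis on $\SO(3)$, identification of the isolated saddle equilibria determined by the eigenvectors of $\bar{\mathsf{M}}$, a trivial ISS argument for the linear $\tilde{p}$-equation, and an appeal to the almost-global ISS framework). However, the decisive step is left as an assertion rather than an argument: you write that you ``would argue that the stable manifolds of the three saddle points remain measure-zero'' under the perturbation $\Gamma_1\tilde{x}$. For a system driven by an arbitrary (time-varying) input there is no invariant stable manifold to speak of; the exceptional set of initial conditions depends on the input signal, and the classical stable-manifold theorem does not apply. This is exactly the non-robustness of AGAS that you yourself flag as the crux, so naming it and then asserting its resolution is a genuine gap, not a routine omission.

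The paper closes this gap by verifying the hypotheses of the results in \cite{angeli2010stability} (Propositions 2 and 3 there), which is a concrete, checkable route: (i) a Lyapunov function $V(\tilde{R},\tilde{p})=\tr(\mathsf{M}(I_3-\tilde{R}))+\frac{1}{2}\|\tilde{p}\|^2$ strictly decreasing for the zero-input system away from the equilibrium set; (ii) the undesired equilibria are \emph{isolated} and \emph{exponentially unstable}, i.e., the linearization of the zero-input system at each $\tilde{R}=\exp(\pi v^\times)$, $v\in\mathcal{E}(\mathsf{M})$, has at least one eigenvalue with positive real part --- this linearization check is the ingredient your sketch never proposes to carry out, and it is what makes the measure-zero property survive the presence of inputs; and (iii) an \emph{ultimate boundedness} estimate under inputs, obtained from $W(\tilde{R},\tilde{p})=|\tilde{R}|_I^2+\|\tilde{p}\|^2$ in the form $\dot{W}\leq-\alpha W+2k_R\lambda_m^{\bar{\mathsf{M}}}+\delta_u(\|\tilde{x}\|)$, which replaces your informal ``forward completeness plus convergence to an input-dependent neighbourhood.'' With (i)--(iii) in hand, almost global ISS follows directly from the cited propositions; without an ingredient of type (ii) and a quantitative bound of type (iii), your reduction from local ISS near $I_3$ plus decrease away from the critical set does not yield the almost-global conclusion. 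Your cascade decomposition of $(\tilde{R},\tilde{p})$ versus the paper's joint Lyapunov function is only a cosmetic difference; the missing piece is the robust almost-global step on $\SO(3)$.
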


\begin{proof}
	See Appendix \ref{sec:ISS}. 
\end{proof}

Now, one can state the following main result:
\begin{thm}\label{thm:AGAS}
	Consider the closed-loop system \eqref{eqn:error_ltv} and \eqref{eqn:error_pose} with $\Gamma_1,\Gamma_2$ defined in \eqref{eqn:def_Gamma}, the pair $(A, C)$ defined in \eqref{eqn:ltv_ABC} and gain matrix $K$ designed in \eqref{eqn:def_K}. Suppose that the IMU measurements are continuous and uniformly bounded, and Assumptions \ref{assum:PE}-\ref{assum:pose} hold. Choose $k_R, k_p >0$, $\bar{\mathsf{M} }$ being positive definite with three distinct eigenvalues, and  $Q(t), V(t)$ being continuous, bounded, and uniformly positive definite. 
	Then, the equilibrium $(\tilde{R}=I_3,\tilde{p}=0,\tilde{x}=0)$ of the closed-loop system is AGAS. 
\end{thm}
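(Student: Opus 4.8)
The plan is to exploit the cascade structure of the closed-loop system: the error dynamics \eqref{eqn:error_ltv} of the LTV observer are autonomous, and they drive the pose error dynamics \eqref{eqn:error_pose} only through the terms $\Gamma_1\tilde{x}$ and $\Gamma_2\tilde{x}$. Accordingly, I would (i) invoke Proposition \ref{prop:x_GES} to conclude that $\tilde{x}$ is bounded and decays exponentially to zero, (ii) invoke Proposition \ref{prop:ISS} to conclude that \eqref{eqn:error_pose} is almost globally ISS with $\tilde{x}$ regarded as an exogenous input, and (iii) combine the two via the standard implication that an almost globally ISS system fed by a converging input is almost globally attractive, thereby obtaining AGAS of the full equilibrium $(\tilde R,\tilde p,\tilde x)=(I_3,0,0)$.

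First I would check that the hypotheses of Proposition \ref{prop:x_GES} hold: since the IMU measurements are continuous and uniformly bounded and Assumption \ref{assum:PE} holds, Lemma \ref{lem:UO} gives uniform observability of $(A,C)$; together with the continuity, boundedness and uniform positive definiteness of $Q(t),V(t)$, the solution $P(t)$ of the CRE \eqref{eqn:CRE} is bounded between two positive multiples of the identity, so Proposition \ref{prop:x_GES} applies. Hence there exist $\alpha,\lambda>0$ with $\|\tilde{x}(t)\|\le\alpha e^{-\lambda t}\|\tilde{x}(0)\|$ for all $t\ge 0$; in particular $\tilde{x}$ is bounded, $\|\tilde{x}(t)\|\le r:=\alpha\|\tilde{x}(0)\|$, and $\tilde{x}(t)\to 0$ as $t\to\infty$.

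Next, I would apply Proposition \ref{prop:ISS} with $\mathcal{D}$ taken as the closed ball of radius $r$ in $\mathbb{R}^{3N+6}$ (or simply $\mathcal{D}=\mathbb{R}^{3N+6}$, since the bounds $\|\Gamma_s\tilde{x}\|\le k_{\Gamma_s}\|\tilde{x}\|$, $s\in\{1,2\}$, hold for all $\hat R,R\in\SO(3)$). Under the standing choices of $k_R,k_p>0$ and $\bar{\mathsf{M}}$ positive definite with three distinct eigenvalues, Proposition \ref{prop:ISS} guarantees that \eqref{eqn:error_pose} is almost globally ISS with respect to $(I_3,0)$: there is a set $\mathcal{N}\subset\SO(3)\times\mathbb{R}^3$ of zero Lebesgue measure such that, for every initial condition $(\tilde R(0),\tilde p(0))\notin\mathcal{N}$ and every admissible input, the pose error obeys an ISS-type bound and the associated asymptotic-gain property $\limsup_{t\to\infty}\mathrm{dist}\big((\tilde R(t),\tilde p(t)),(I_3,0)\big)\le\gamma\big(\limsup_{t\to\infty}\|\tilde{x}(t)\|\big)$ for some $\gamma\in\mathcal{K}$ (cf.\ \cite{angeli2004almost}).

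Finally I would close the cascade argument. Feeding the specific input $\tilde{x}(\cdot)$ produced by \eqref{eqn:error_ltv} into these estimates, boundedness of $\tilde{x}$ yields boundedness of the pose error, and stability of $\tilde{x}=0$ for \eqref{eqn:error_ltv} together with the ISS/$0$-stability of \eqref{eqn:error_pose} gives stability of the cascade equilibrium $(\tilde R,\tilde p,\tilde x)=(I_3,0,0)$; since $\tilde{x}(t)\to 0$, the asymptotic-gain inequality forces $(\tilde R(t),\tilde p(t))\to(I_3,0)$ for every $(\tilde R(0),\tilde p(0))\notin\mathcal{N}$ and every $\tilde{x}(0)\in\mathbb{R}^{3N+6}$. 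The exceptional set for the full state is therefore contained in $\mathcal{N}\times\mathbb{R}^{3N+6}$, which has zero Lebesgue measure in $\SO(3)\times\mathbb{R}^3\times\mathbb{R}^{3N+6}$, so $(\tilde R=I_3,\tilde p=0,\tilde x=0)$ is AGAS. The main obstacle is this last step: one must argue carefully that the measure-zero exceptional set of Proposition \ref{prop:ISS} is not enlarged in the interconnection — which is the case here because \eqref{eqn:error_ltv} is globally exponentially stable, so no initial conditions of the $\tilde{x}$-subsystem are excluded — and that a uniformly bounded, asymptotically vanishing input suffices to preserve both the ISS estimate and the asymptotic gain, which is precisely what the almost-global ISS framework of \cite{angeli2004almost} provides.
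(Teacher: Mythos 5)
Your proposal is correct and follows essentially the same route as the paper: establish GES of the $\tilde{x}$-subsystem via Proposition \ref{prop:x_GES}, almost global ISS of the $(\tilde{R},\tilde{p})$-subsystem via Proposition \ref{prop:ISS}, and conclude AGAS of the cascade by the argument of \cite[Theorem 2]{angeli2004almost}. The only cosmetic difference is that the paper additionally notes local exponential stability of the full equilibrium (deferring to \cite[Theorem 1]{wang2022nonlinear}), whereas you argue stability directly from $0$-stability of the ISS subsystem together with GES of the driving subsystem; both are adequate.
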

\begin{proof}
	The proof of Theorem  \ref{thm:AGAS} is motivated by \cite[Theorem 2]{angeli2004almost} using the results obtained in Proposition \ref{prop:x_GES} and \ref{prop:ISS}. 
	From \eqref{eqn:error_ltv} and \eqref{eqn:error_pose}, one obtains the following cascaded system on $\SO(3)\times \mathbb{R}^3 \times \mathbb{R}^{3N+6}$
	\begin{align}\label{eqn:cascaded_system}
		\begin{cases}
		\dot{\tilde{R}} =  \tilde{R}  (  -k_R \psi(\mathsf{M} \tilde{R}) +   \Gamma_1\tilde{x} )^\times  \\
		~\dot{\tilde{p}}  = -    k_p   \tilde{p}  +   \Gamma_2\tilde{x}  \\		
		~\dot{\tilde{x}}  = (A-KC) \tilde{x}
		\end{cases}
	\end{align}
	It is clear that $(\tilde{R}=I_3,\tilde{p}=0,\tilde{x}=0)$ is the equilibrium of the $(\tilde{R},\tilde{p})$-subsystem and $(\tilde{x}=0)$ is the equilibrium of the $\tilde{x}$-subsystem. The local exponential stability of the equilibrium $(\tilde{R}=I_3,\tilde{p}=0,\tilde{x}=0)$ for system \eqref{eqn:cascaded_system} can be easily deduced following the proof of \cite[Theorem 1]{wang2022nonlinear}, therefore, it is omitted here. As per Proposition \ref{prop:x_GES}, the  $\tilde{x}$-subsystem is GES at $(\tilde{x}=0)$, \ie, $\|\tilde{x}(t)\| \leq \alpha \exp(-\lambda t) \|\tilde{x}(0)\|$ for all $t\geq 0$ with some constant scalars $\alpha,\lambda>0$. Consequently, it follows that $\|\Gamma_s(t) \tilde{x}(t)\|\leq k_{\Gamma_s}\alpha \exp(-\lambda t) \|\tilde{x}(0)\|$ for all $t\geq 0$ and $s\in\{1,2\}$.  
    Moreover, as per Proposition \ref{prop:ISS}, the $(\tilde{R},\tilde{p})$-subsystem is almost globally ISS with respect to the equilibrium $(\tilde{R}=I_3,\tilde{p}=0)$ and input $\tilde{x}$. Therefore, according to the proof of \cite[Theorem 2]{angeli2004almost}, one can conclude that the cascaded system \eqref{eqn:cascaded_system}  is AGAS at the equilibrium $(\tilde{R}=I_3,\tilde{p}=0, \tilde{x}=0)$. 
    This completes the proof. 
\end{proof}

\begin{rem}
	Theorem \ref{thm:AGAS} implies that the estimated states $(\hat{R},\hat{p},\hat{v},\hat{\eta},\PB \hat{p}_{i})$ converge asymptotically to the actual states $(R,p,v,\eta,\PB p_i)$ from almost all initial conditions except from a set of zero Lebesgue measure.  
    Since $\SO(3)$ is a Lie group and not homeomorphic to $\mathbb{R}^n$, AGAS is the strongest result one can achieve via the time-invariant smooth observer \eqref{eqn:observer_pose}. 
	Note that the inertial frame velocity and gravity vector can be obtained as $\hat{R}\T \hat{v}$ and $\hat{R}\hat{\eta}$, respectively. Moreover, the inertial frame positions of the unknown landmarks ($M+1,M+2,\hdots, N$) can be obtained from the estimated pose $(\hat{R},\hat{p})$ and landmark position measurements $\PB \hat{p}_i$. From \eqref{eqn:def_pB}, one can obtain the estimated inertial frame landmark positions as $\hat{p}_i = \hat{R}\PB \hat{p}_i + \hat{p}$ for all $i=M+1,\hdots,N$. This feature can be potentially extended to mapping problems to map the environment with at least three non-aligned landmarks known in the inertial frame. 
\end{rem}

\section{SIMULATION}

In this section, numerical simulation results are presented to illustrate the performance of the proposed observers in Section \ref{sec:observer_design}. 
We consider an autonomous UAV equipped with an IMU and a downward-facing monocular camera that captures $N=16$ landmarks located on the ground. Within this setup, we assume that the first $M=4$ non-aligned landmarks are known in the inertial frame.
We further consider that the UAV is moving along a `8'-shape trajectory described by $p(t) = 2[\sin(t), \sin(t)\cos(t), 1]\T$ (m) with initial conditions $p(0)=[0,0,2]\T,v(0)=[2,2,0]\T, R(0)=I_3$, and the IMU measurements are chosen as $\omega(t) = [-\cos(2t), 1, \sin(2t)]$ (rad/s) and $a(t) = R\T(t) (\ddot{p}(t) - g)$ (m/s\textsuperscript{2}) with the gravity vector $g=[0,0,-9.81]\T$. Moreover, the transformation pose between the camera and the rigid body is given as $(R_c = I_3, p_c = [0.02, 0.06, 0.01]\T)$, and the bearing measurements are computed using \eqref{eqn:bearing_def} in terms of the ground-truth pose and landmark positions.

The initial conditions for the proposed observers are chosen as $\hat{v}(0)=\hat{\eta}(0)=\PB \hat{p}_i(0)=[0,0,0]\T, \forall i\in \mathbb{I}$ for the LTV observer \eqref{eqn:observer_ltv}, and $\hat{p}(0)=[0,0,0]\T, \hat{R}(0)= \exp(0.9\pi u^\times), u \in \mathbb{S}^2$ for the pose observer \eqref{eqn:observer_pose}. The gain parameters are given as $k_R = 40, k_p = 100, \rho_i = 1/M, \forall i\in \mathbb{I}'$ for the pose observer \eqref{eqn:observer_pose} and $Q = 10^{-4}I_{3N}, V = 10^{6} I_{3N+6}$ for the LTV observer \eqref{eqn:observer_ltv}. Simulation results are shown in Fig. \ref{fig:simulation}. Figure \ref{fig:fig1} illustrates the time evolution of landmark positions and velocity estimation errors as well as $\hat{R}\hat{\eta}$ which converges to $g$.  Figure \ref{fig:fig2} illustrates the convergence to zero of the pose estimation errors from a large initial condition. Furthermore, Figure \ref{fig:fig3} provides a visual representation of the locations of the landmarks and the trajectories of the UAV's position, the estimated position and the estimated (inertial frame) landmark positions.

\begin{figure}[!ht]
	\centering
	\subfloat[]{\includegraphics[width=0.4\textwidth]{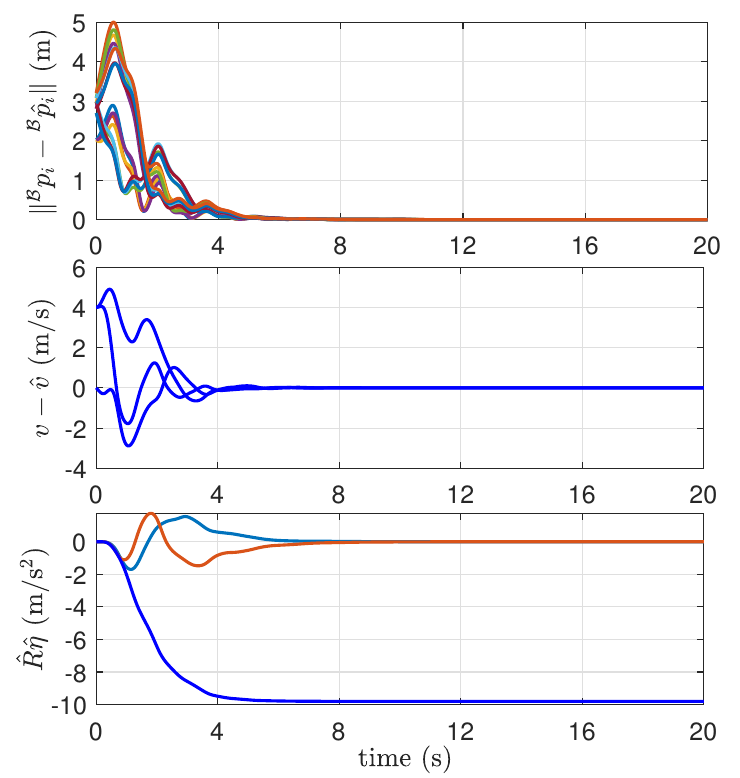}
	\label{fig:fig1}}\\[-0.01in]
	\subfloat[]{\includegraphics[width=0.4\textwidth]{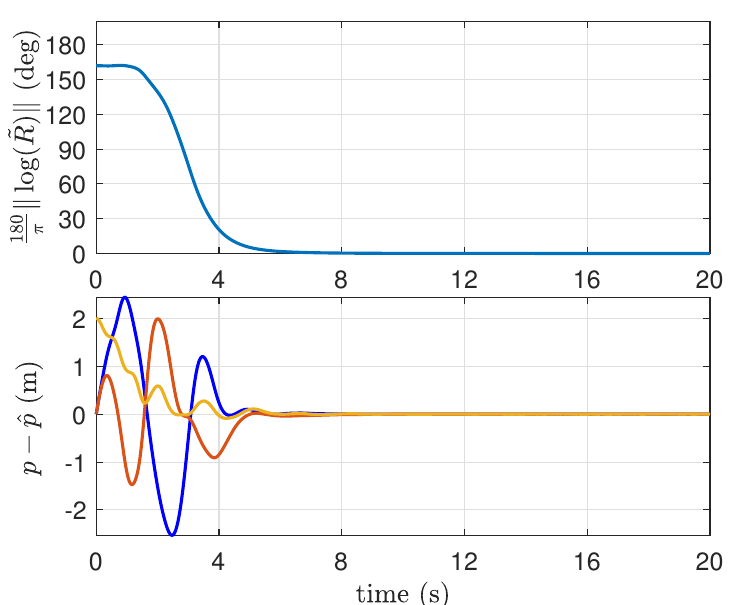}
	\label{fig:fig2}}\\[-0.01in]
    \subfloat[]{\includegraphics[width=0.4\textwidth]{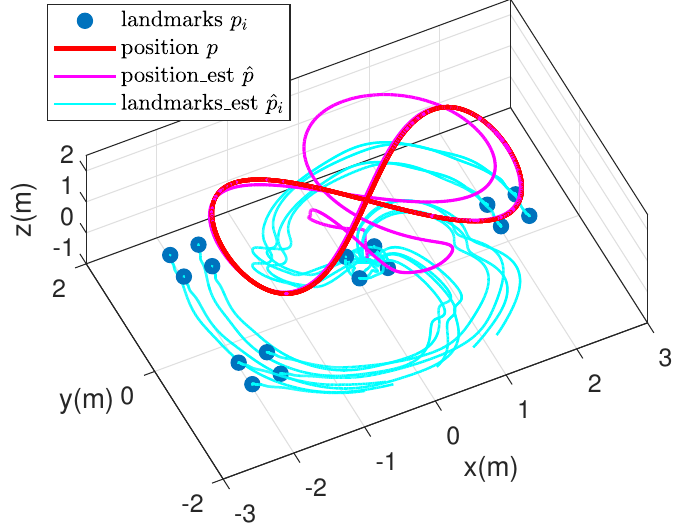}
	\label{fig:fig3}}\\[-0.03in] 
	\caption{Estimation results of the proposed estimation scheme: (a) Estimation errors of the linear velocity and landmark positions as well as the estimated gravity vector; (b) Estimation errors of the attitude and position; (c) The locations of the landmarks and the trajectories of the position, the estimated position and the estimated (inertial frame) landmark positions $\hat{p}_i = \hat{R}\PB \hat{p}_i + \hat{p}, \forall i\in \mathbb{I}$.} 
	\label{fig:simulation}
\end{figure}


\section{CONCLUSIONS}
In this work, we formulated the problem of pose, velocity, and landmark position estimation in terms of the IMU and monocular bearing measurements. We first designed an LTV observer for the body-frame linear velocity and landmark position estimation on the vector space $\mathbb{R}^{3N+6}$. With these estimated body-frame linear velocity and landmark positions as well as some landmarks known in the inertial frame, we designed a nonlinear pose observer on $\SO(3)\times \mathbb{R}^3$. The overall closed-loop system on $\SO(3)\times \mathbb{R}^3 \times \mathbb{R}^{3N+6}$, which is an interconnection of a linear GES sub-system and an almost global ISS sub-system on $\SO(3)\times \mathbb{R}^3$, is shown to be AGAS. Our estimation results showed that the depth for each landmark can be recovered when the PE condition \eqref{eqn:PE} on each bearing measurement is satisfied. This is a nice feature for practical low-cost applications when the vehicles are only equipped with an IMU and a monocular camera. The simulation results also showed that the proposed estimation scheme can be potentially extended to map the environment with a small number of landmarks (at least three non-aligned) known in the inertial frame.




\section*{APPENDIX}


\subsection{Proof of Lemma \ref{lem:UO}} \label{sec:UO}
This proof is motivated by the proof of \cite[Lemma 3]{wang2020hybrid} and \cite[Lemma 2]{wang2022nonlinear}. 
In order to show that the pair $(A(t), C(t))$ in \eqref{eqn:ltv_ABC} is uniformly observable, we are going to verify the existence  of constants $\delta,\mu>0$ such that the observability Gramian satisfies
\begin{align}
	W_o(t,t+\delta)  
	&  =\frac{1}{\delta} \int_{t}^{t+\delta} \Phi\T(\tau,t) C\T(\tau)   C(\tau) \Phi(\tau,t) d\tau   \nonumber \\
	&\geq \mu I_{3N+6}, \quad \forall t\geq 0  \label{eqn:W_oA_C}
\end{align} 
with $\Phi(t,\tau)$ denoting the state transition matrix associated to $A(t)$ satisfying: $\frac{d}{dt} \Phi(t,\tau)= A(t) \Phi(t,\tau)$,  $\Phi(t,t) = I$,  $\Phi^{-1}(t,\tau) = \Phi(\tau,t)$ for all $t,\tau\geq 0$, and   $ \Phi(t_3,t_2)\Phi(t_2,t_1) = \Phi(t_3,t_1)$ for all $t_1,t_2,t_3\geq 0$.

To find the explicit solution of the observability Gramian matrix, one can rewrite the matrices $A(t)$ and $C(t)$ in \eqref{eqn:ltv_ABC} as follows:
\begin{align}
	A & = \underbrace{\begin{bmatrix}
		\omega^\times & \dots & 0_{3\times 3} \\
		\vdots & \ddots & \vdots \\
		0_{3\times 3} & \dots & \omega^\times
	\end{bmatrix}}_{\Omega} + \underbrace{
		\begin{bmatrix}
			0_{3N \times 3N} & -\Gamma  & 0_{3N\times 3} \\
			0_{3 \times 3N} & 0_{3\times 3} & I_3 \\
			0_{3 \times 3N} & 0_{3\times 3} & 0_{3\times 3}
		\end{bmatrix}
	}_{\bar{A}}  \nonumber \\
	C &= \Lambda_z \underbrace{\begin{bmatrix}
		I_{3N} & 0_{3N\times 3} & 0_{3N\times 3}
	\end{bmatrix}}_{\bar{C}} \label{eqn:A_decomp}  
\end{align}  
with $\Gamma$ and $\Lambda_z$ defined in \eqref{eqn:def_A_Gamma_Lambda}. 
For each $0<n\in \mathbb{N}$,  we introduce the following block diagonal matrix:
\begin{equation*}
	\mathcal{T}_n(\bar{R}): = \blkdiag(\bar{R}, \bar{R},\dots, \bar{R})   \in \mathbb{R}^{3n\times 3n}, ~ \forall \bar{R}\in \SO(3).
\end{equation*}
It is not difficult to verify that $T\T T = T T\T = I_{3n}$ if $T = \mathcal{T}_n(\bar{R})$.  
Let $\mathsf{T} := \mathcal{T}_{N+2} (R\T) \in \mathbb{R}^{(3N+6)\times (3N+6)}$ and $\bar{\mathsf{T}}:= \mathcal{T}_{N} (R\T) \in \mathbb{R}^{3N\times 3N}$ where $R$ is the rotation matrix of the rigid body. 
Then, one obtains the following properties:
\begin{align} \label{eqn:properties_T}
		\mathsf{T}(t) \bar{A}          = \bar{A} \mathsf{T}(t), \quad   
		\bar{C} {\mathsf{T}}(t)                  =\bar{\mathsf{T}}(t) \bar{C}.   
\end{align}
Similar to the proof of \cite[Lemma 3]{wang2020hybrid},  from \eqref{eqn:A_decomp}-\eqref{eqn:properties_T} and the fact $\dot{\mathsf{T}}(t) = -\Omega(t) \mathsf{T}(t)$,  the state transition matrix $\Phi(t,\tau)$ can be explicitly written as
\begin{equation}
	\Phi(t,\tau) = \mathsf{T}(t) \bar{\Phi}(t,\tau) \mathsf{T}\T(\tau)
	\label{eqn:def_Phi}
\end{equation}
with $\bar{\Phi}(t,\tau) = \exp(\bar{A}(t-\tau))$ denoting the state transition matrix associated to  $\bar{A}$. Consequently, one can show that
\begin{align}
	C(\tau) \Phi(\tau,t) 
	                   & = \Lambda_z(\tau) \bar{C}\mathsf{T}(\tau) \bar{\Phi}(\tau,t) \mathsf{T}\T(t) \nonumber \\
	                   & = \Lambda_z(\tau)\bar{\mathsf{T}}(\tau) \bar{C} \bar{\Phi}(\tau,t) \mathsf{T}\T(t).
\end{align}
 
For the sake of simplicity, let $\check{C}(t) :=  \bar{\mathsf{T}}\T(t)  \Lambda_z(t) \bar{\mathsf{T}}(t) \bar{C}  $ such that 
$\bar{C}\T \bar{\mathsf{T}}\T(\tau) \Lambda_z\T (\tau) \Lambda_z(\tau) \bar{\mathsf{T}}(\tau) \bar{C} 
= \check{C}\T(\tau)  \check{C}(\tau) $
from the fact
$\bar{\mathsf{T}}   \bar{\mathsf{T}}\T  = I_{3N}$. Then, the observability Gramian matrix in \eqref{eqn:W_oA_C} associated to the pair $(A(t),C(t))$, can be written as
\begin{align}
	 & W_o(t,t+\delta) \nonumber\\
	 & ~   =  \mathsf{T}(t)  \left( \frac{1}{\delta}
		\int_{t}^{t+\delta}  \bar{\Phi}\T(\tau,t) \check{C}\T(\tau)  \check{C}(\tau)
	\bar{\Phi}(\tau,t) d\tau \right)  \mathsf{T}\T(t)  \nonumber  
\end{align}
Let $
\bar{W}_o(t,t+\bar{\delta}) := \frac{1}{\bar{\delta}} \int_{t}^{t+\bar{\delta}}
\bar{\Phi}\T(\tau,t)
\check{C}\T(\tau)  \check{C}(\tau)
\bar{\Phi}(\tau,t) d \tau$ denote the observability Gramian matrix associated to the pair $(\bar{A},\check{C}(t))$. 
Suppose that the pair $(\bar{A}, \check{C}(t))$ is uniformly observable, \ie, there exist constants $\bar{\delta}, \bar{\mu}>0$ such that $\bar{W}_o(t,t+\bar{\delta}) \geq \bar{\mu} I_{3N+6}$ for all $t\geq 0$. Then, picking $\delta\geq \bar{\delta}$ and $0<\mu\leq \frac{\bar{\mu} \bar{\delta}}{\delta}$, one can show that
$ 
	W_o(t,t+\delta)   
	\geq  \frac{\bar{\delta}}{\delta} \mathsf{T}(t)  \bar{W}_o(t,t+\bar{\delta}) \mathsf{T}\T(t)  
	 \geq \frac{ \bar{\mu} \bar{\delta}}{\delta} \mathsf{T}(t)  \mathsf{T}\T(t)\geq \mu I_{3N+6}
$ for all $t\geq 0$, 
which implies that the pair $(A(t),C(t))$   in \eqref{eqn:ltv_ABC} is uniformly observable. It remains to prove uniform observability of the pair $(\bar{A}, \check{C}(t))$.

From the definition of $\bar{A}$ in \eqref{eqn:A_decomp}, one can verify that  $\bar{A}$ is a
nilpotent matrix with $\bar{A}^3 =0$. Applying \cite[Lemma 1]{wang2022nonlinear}, one only needs to show that there exist constants $\bar{\delta}, \bar{\mu}>0$ such that
\begin{equation} \label{eqn:condition_O}
	\int_{t}^{t+\bar{\delta}}  \mathcal{O}\T(\tau) \mathcal{O}(\tau) d\tau > \bar{\mu} I_{3N+6}, \quad \forall t\geq 0
\end{equation}
where the matrix $\mathcal{O} := [\check{C}\T(t), ~ (\check{C}(t) \bar{A})\T, ~  (\check{C}(t) \bar{A}^2)\T]\T \in \mathbb{R}^{9N\times (3N+6)}$ is given as
\begin{align}\label{eqn:def_O}
	\mathcal{O}(t)
	= \underbrace{
		\begin{bmatrix}
			\bar{\Lambda}_z(t) & 0_{3N \times 3N}                & 0_{3N \times 3N} \\
			0_{3N \times 3N}                 & \bar{\Lambda}_z(t) & 0_{3N \times 3N}  \\
			0_{3N \times 3N}                & 0_{3N \times 3N}                & \bar{\Lambda}_z(t)
		\end{bmatrix}
	}_{\mathcal{M}(t)} \underbrace{
		\begin{bmatrix}
			\bar{C} \\
			\bar{C} \bar{A} \\
			\bar{C} \bar{A}^2
		\end{bmatrix} 
	}_{{\mathcal{O}}_c} 
\end{align}
with $\bar{\Lambda}_z: = \bar{\mathsf{T}}\T  \Lambda_z\bar{\mathsf{T}} = \blkdiag(\bar{\Pi}_{z_1}, \bar{\Pi}_{z_2}, \dots, \bar{\Pi}_{z_N})$ from \eqref{eqn:def_A_Gamma_Lambda} and $\bar{\Pi}_{z_i}: = R \Pi_{z_i}R\T$.  
From the definitions of $\Pi_{z_i}$, $z_i'$ and the properties of  $\pi(\cdot)$ defined in \eqref{eqn:def_pi_x}, one can show that $\bar{\Pi}_{z_i}\T \bar{\Pi}_{z_i} = R\Pi_{z_i}\T  \Pi_{z_i}R\T =R  \Pi_{z_i}R\T = \pi(R R_cz_i ) =  \pi(z'_i) $. Therefore, from the definition of $\bar{\Lambda}_z$, inequality \eqref{eqn:PE} in Assumption \ref{assum:PE} implies that 
$
	\int_{t}^{t+\delta_o} \bar{\Lambda}_z\T(\tau) \bar{\Lambda}_z  (\tau) d\tau > \mu_o I_{3N}
$ holds for all $t\geq 0$. 
Consequently, from the definition of $\mathcal{M}$ in \eqref{eqn:def_O}, one obtains
\begin{align}\label{eqn:M^TM}
	\int_{t}^{t+\delta_o} \mathcal{M}\T(\tau) \mathcal{M}(\tau) d\tau > \mu_o I_{9N}, \quad \forall t\geq 0.
\end{align}
Then, picking $\bar{\delta}\geq \delta_o$, it follows from \eqref{eqn:def_O} and \eqref{eqn:M^TM} that
\begin{align}
	\int_{t}^{t+\bar{\delta}}  \mathcal{O}\T(\tau) \mathcal{O}(\tau) d \tau
	 & =	 \int_{t}^{t+\bar{\delta}} {\mathcal{O}}_c\T \mathcal{M}\T(\tau) \mathcal{M}(\tau) {\mathcal{O}}_c d\tau \nonumber\\
	 & \geq   {\mathcal{O}}_c\T \left(  \int_{t}^{t+\delta_o}  \mathcal{M}\T(\tau) \mathcal{M}(\tau) d \tau \right)  {\mathcal{O}}_c  \nonumber \\
	 & \geq \mu_o  {\mathcal{O}}_c\T  {\mathcal{O}}_c. \label{eqn:O_c^TO_c}
\end{align}
Let $\bar{\mathcal{O}}_c := [\bar{C}\T~ N_1\T~ N_2\T]\T$ be a matrix composed of $3N+6$ row vectors of matrix ${\mathcal{O}}_c$ with $N_1$ denoting the first three rows of $\bar{C}\bar{A}$ and $N_2$ denoting the first three rows of $\bar{C}\bar{A}^2$. From the definitions of $\bar{A}$ in \eqref{eqn:A_decomp}  and $\bar{C}$ in \eqref{eqn:C_decomp}, one has
\begin{align*}
	\bar{\mathcal{O}}_c = \begin{bmatrix}
		                    \bar{C} \\
		                    N_1 \\
		                    N_2
	                    \end{bmatrix} = \begin{bmatrix}
		                                    I_{3N}        & 0_{3N \times 3} & 0_{3N \times 3} \\ 
		                                    0_{3\times 3N}  & -I_3          & 0_{3\times 3} \\
		                                    0_{3\times 3N}  & 0_{3\times 3} & -I_3
	                                    \end{bmatrix}
\end{align*}
which indicates that ${\mathcal{O}}_c\T  {\mathcal{O}}_c  \geq  \bar{\mathcal{O}}_c\T \bar{\mathcal{O}}_c = I_{3N+6}$. 
Consequently, picking $\bar{\mu} < \mu_o$ it follows from \eqref{eqn:O_c^TO_c} that
\begin{align}
      \int_{t}^{t+\bar{\delta}}  \mathcal{O}\T(\tau) \mathcal{O}(\tau) d \tau \geq \mu_o   {\mathcal{O}}_c\T  {\mathcal{O}}_c > \bar{\mu} I_{3N+6} \nonumber  
\end{align}
for all $t\geq 0$, which gives \eqref{eqn:condition_O}. This completes the proof.

\subsection{Proof of Proposition \ref{prop:x_GES}} \label{sec:prop_x_GES}
Since the conditions in Lemma \ref{lem:UO} are satisfied, and $Q(t), V(t)$ are chosen to be continuous, bounded, and uniformly positive definite, according to \cite{wang2020nonlinear}, there exists a unique solution $P(t)$ to the CRE \eqref{eqn:CRE} satisfying $p_m I_{3N+6}< P(t) < p_M I_{3N+6}$ for all $t\geq 0$ with some constant scalars $0<p_m<p_M<\infty$. 

Consider the following Lyapunov function candidate:
\begin{equation}
	V(\tilde{x}) = \tilde{x}\T P^{-1}\tilde{x}.  \label{eqn:def_V_x}
\end{equation}
From the fact $p_m I_{3N+6}< P < p_M I_{3N+6}$, one has
\begin{equation}
	\frac{1}{p_M} \|\tilde{x}\|^2  \leq V(\tilde{x})  \leq \frac{1}{p_m} \|\tilde{x}\|^2   \label{eqn:V_x_bound}
\end{equation}
for all $t \geq 0$. Moreover, one can show from \eqref{eqn:CRE} that $\dot{P}^{-1} = - P^{-1} \dot{P} P^{-1} = - P^{-1}   A  - A\T P^{-1} + C\T QC - P^{-1}V P^{-1} $. 	
Since $V(t)$ is uniformly positive definite, there exists a constant $v_m > 0$ such that $V(t) \geq v_m I_{3N+6}$ for all $t\geq 0$. 
Hence, from \eqref{eqn:error_ltv} and \eqref{eqn:def_K}, the time-derivative of $V$ defined in \eqref{eqn:def_V_x} is given as 
\begin{align}
	\dot{V}(\tilde{x}) & = \tilde{x}\T(A-KC)P^{-1}\tilde{x} + \tilde{x}\T P^{-1}(A-KC) \tilde{x} \nonumber \\
	& ~~  + \tilde{x}\T (- P^{-1}A  - A\T P^{-1} + C\T QC - P^{-1}V P^{-1})\tilde{x} \nonumber \\
	& = -\tilde{x}\T C\T Q C \tilde{x} - \tilde{x}\T   P^{-1}V P^{-1} \tilde{x}  \nonumber \\
	& \leq - \frac{v_m}{p_M^2} \|\tilde{x}\|^2, \quad \forall t\geq 0 \label{eqn:dot_V_x}
\end{align} 
where we made use of the fact $-\tilde{x}\T C\T Q C \tilde{x} \leq 0$ since $Q(t)$ is uniformly positive definite. Therefore, from \eqref{eqn:V_x_bound} and \eqref{eqn:dot_V_x}, one can show that $\dot{V}(\tilde{x}) \leq - \lambda V(\tilde{x})$ with $\lambda : = \frac{v_m p_m}{2p_M^2}$, which implies $V(\tilde{x}(t)) \leq \exp(-2\lambda t) V(\tilde{x}(0))$. Consequently, from \eqref{eqn:V_x_bound} one concludes $\|\tilde{x}(t)\| \leq \alpha \exp(-\lambda t) \|\tilde{x}(0)\|$ with $\alpha := ({\frac{p_M}{p_m}})^{\frac{1}{2}}$ for all $t \geq 0$, \ie, the equilibrium point $\tilde{x}=0$ is GES. This completes the proof. 
	 
\subsection{Proof of Proposition \ref{prop:ISS}} \label{sec:ISS}
	The proof is motivated by the results in \cite[Proposition 1]{wang2021nonlinear} and \cite[Proposition 1]{wang2022hybrid} using the approach in \cite{angeli2010stability}. We first show that the error system \eqref{eqn:error_pose} satisfies Assumptions A0–A2 in \cite{angeli2010stability}.  One can easily verify that assumption A0 is fulfilled since system \eqref{eqn:error_pose} evolves on the manifold $\SO(3)\times \mathbb{R}^3\times \mathcal{D}$ with $\mathcal{D}$ being a closed subset of $\mathbb{R}^{3N+6}$. For assumption A1, we consider the smooth function on $\SO(3)\times \mathbb{R}^3$ as
	\begin{equation}
		V(\tilde{R},\tilde{p}) = \tr(\mathsf{M} (I_3 - \tilde{R})) + \frac{1}{2} \|\tilde{p}\|^2 
	\end{equation}
	whose time derivative along the trajectories of \eqref{eqn:error_pose} with $\tilde{x} \equiv 0$ is given as
	\begin{align}
		\dot{V}(\tilde{R},\tilde{p}) &= \tr(-\mathsf{M} \tilde{R}(-k_R \psi(\mathsf{M} \tilde{R}))^\times) + \tilde{p}\T (- k_p   \tilde{p} ) \nonumber \\
		& = -2k_R \|\psi(\mathsf{M} \tilde{R}\|^2 - k_p \|\tilde{p}\|^2
	\end{align}
	where we made use of the facts $\psi(A)=-\psi(A\T)$ and $\tr(A\T u^\times)  = 2u\T \psi(A)$ for any $A\in \mathbb{R}^{3\times 3}, u\in \mathbb{R}^3$. This implies that $\dot{V}(\tilde{R},\tilde{p}) < 0$ for all $(\tilde{R},\tilde{p}) \notin \mathcal{W}:=\{(\tilde{R},\tilde{p})\in \SO(3)\times \mathbb{R}^3, \psi(\mathsf{M} \tilde{R})=0, \tilde{p}=0\}$ with $\mathcal{W}$ denoting the equilibria set of the zero-input system \eqref{eqn:error_pose} (\ie, $\tilde{x}\equiv 0$), and then assumption A1 is fulfilled. Applying LaSalle’s invariance principle, it follows that the solution $(\tilde{R},\tilde{p})$ of the zero-input system \eqref{eqn:error_pose} converges asymptotically to the equilibrium set $\mathcal{W}$. From $\psi(\mathsf{M} \tilde{R}) = 0$ one has $\mathsf{M} \tilde{R} = \tilde{R}\T \mathsf{M} $, which implies that $\tilde{R}\in \{I_3\} \cup \{\tilde{R} \in \SO(3): \tilde{R} =\exp(\pi v^\times), v\in \mathcal{E}(\mathsf{M} )\}$ \cite{mahony2008nonlinear}. Hence, the equilibrium set $\mathcal{W}$ can be explicitly rewritten as 
	$ 
		\mathcal{W} = \{(I_3,0)\}\cup\{(\tilde{R},\tilde{p})\in \SO(3)\times \mathbb{R}^3:  
		\tilde{R} =\exp(\pi v^\times), v\in \mathcal{E}(\mathsf{M} ), \tilde{p}=0\}.
	$ 
	Note that the undesired equilibria of the zero-input system \eqref{eqn:error_pose} in the set $\mathcal{W} \setminus \{(I_3,0)\}$ are isolated since $\bar{\mathsf{M} } = \frac{1}{2}(\tr(\mathsf{M} )I_3 - \mathsf{M} )$ is positive definite with three distinct eigenvalues. One can further show that the linearized zero-input system \eqref{eqn:error_pose} at each undesired equilibrium $(R_{v\in \mathcal{E}(\mathsf{M} )}^* = \exp(\pi v^\times),\tilde{p}=0) \in \mathcal{W} \setminus \{(I_3,0)\}$ has at least one positive eigenvalue (for instance, see the proof of \cite[Theorem 1]{wang2022nonlinear} for the $\tilde{R}$-subsystem). Hence, the equilibrium $(I_3,0)$ is AGAS for system \eqref{eqn:error_pose} with $\tilde{x}\equiv 0$ and the assumption A2 is fulfilled. 

	To show that system \eqref{eqn:error_pose} fulfills the ultimate boundedness property, consider 
	\begin{align}
		W(\tilde{R},\tilde{p}) = |\tilde{R}|_I^2 + \|\tilde{p}\|^2
	\end{align}
	with $|\tilde{R}|_I:= \frac{1}{4}\tr(I_3-\tilde{R}) \in [0,1]$. Then, from \eqref{eqn:error_pose} one has
	\begin{align}
		\dot{W}(\tilde{R},\tilde{p}) & = \frac{1}{4}\tr(-\tilde{R} (-k_R \psi(\mathsf{M} \tilde{R}) + \Gamma_1\tilde{x} )^\times)  \nonumber \\
		&\quad  + 2\tilde{p}\T ( - k_p \tilde{p}+ \Gamma_2\tilde{x}) \nonumber \\
		& = - \frac{k_R}{2}  \psi(\mathsf{M} \tilde{R}) \T \psi(\tilde{R})  +  \frac{1}{2} (\Gamma_1\tilde{x}) \T \psi(\tilde{R})  \nonumber \\
		& \quad - 2 k_p \| \tilde{p}\|^2 + 2\tilde{p}\T \Gamma_2\tilde{x} \nonumber \\ 
		& \leq  - \frac{k_R\lambda_m^{\bar{\mathsf{M} }}  }{2}  \|\psi(\tilde{R}) \|^2 +  \frac{k_{\Gamma_1} }{2}  \|\tilde{x} \|\|\psi(\tilde{R}) \| \nonumber \\
		& \quad - 2 k_p \| \tilde{p}\|^2 + 2k_{\Gamma_2}\|\tilde{p}\|  \|\tilde{x}\| \nonumber \\ 
		& \leq  - k_R \lambda_m^{\bar{\mathsf{M} }}  |\tilde{R}|_I^2(1-|\tilde{R}|_I^2)  - k_p \| \tilde{p}\|^2  + \delta_u (\|\tilde{x} \|)   \nonumber \\
		& \leq  - k_R \lambda_m^{\bar{\mathsf{M} }}  |\tilde{R}|_I^2   - k_p \| \tilde{p}\|^2  + 2k_R \lambda_m^{\bar{\mathsf{M} }}  + \delta_u (\|\tilde{x} \|)  \nonumber \\ 
		& \leq -\alpha W(\tilde{R},\tilde{p}) + 2k_R \lambda_m^{\bar{\mathsf{M} }}  + \delta_u (\|\tilde{x} \|) 
	\end{align}
	where $\alpha: = \min\{k_R \lambda_m^{\bar{\mathsf{M} }}, k_p\} > 0, \delta_u (\|\tilde{x} \|) :=  (\frac{k_{\Gamma_1}^2}{4k_R\lambda_m^{\bar{\mathsf{M} }}}  +   \frac{k_{\Gamma_2}^2}{k_p} )  \|\tilde{x}\|^2$, and we have made use of the properties of $\psi(\cdot)$ and the facts $\psi(\mathsf{M} \tilde{R}) \T \psi(\tilde{R})  = \psi(\tilde{R})\T \bar{\mathsf{M} } \psi(\tilde{R}) \geq \lambda_m^{\bar{\mathsf{M} }} \|\psi(\tilde{R})\|^2 $, $\|\psi(\tilde{R})\|^2 = 4|\tilde{R}|_I^2(1-|\tilde{R}|_I^2)$, $\|\Gamma_1\tilde{x}\| \leq k_{\Gamma_1} \|\tilde{x}\|$ and $\|\Gamma_2 \tilde{x}\|\leq k_{\Gamma_2}\|\tilde{x}\|$, and the inequalities $ \frac{k_{\Gamma_1} }{2}  \|\tilde{x} \|\|\psi(\tilde{R}) \| \leq \frac{k_R\lambda_m^{\bar{\mathsf{M} }}}{4}   \|\psi(\tilde{R}) \|^2 + \frac{k_{\Gamma_1}^2}{4k_R\lambda_m^{\bar{\mathsf{M} }}} \|\tilde{x}\|^2$ and $2k_{\Gamma_2}\|\tilde{p}\|  \|\tilde{x}\| \leq k_p\tilde{p}\|^2 +  \frac{k_{\Gamma_2}^2}{k_p}  \|\tilde{x}\|^2 $. Applying the results in \cite[Proposition 2 and Proposition 3]{angeli2010stability}, one can conclude that system \eqref{eqn:error_pose} satisfies the ultimate boundedness property, and therefore it is almost globally ISS with respect to the equilibrium $(I_3,0)$. This completes the proof.



\bibliographystyle{IEEEtran}
\bibliography{mybib}

\end{document}